\newtheorem{theorem}{Theorem}
\newtheorem{definition}{Definition}
\newtheorem{lemma}{Lemma}
\def\TAB{Table}
\def\FIG{Fig.}
\def\SEC{Section}
\def\TAB{Table}
\def\EQ{Equation}
\def\DEF{Definition}
\newcommand{\code}{\texttt}
\newcommand{\SYM}{\textsf}
\DeclareMathOperator*{\prd}{\mathrm{Prd}}
\DeclareMathOperator*{\cns}{\mathrm{Cns}}
\DeclareMathOperator*{\pred}{\mathrm{prec}}
\def\SADF{MADF}
\def\FSADF{FSM-SADF}
\DeclareMathOperator{\mc}{\textit{MC}}
\begin{document}

\title{Modeling, Analysis, and Hard Real-time Scheduling of Adaptive Streaming Applications}

\author{Jiali~Teddy~Zhai,~Sobhan~Niknam,~and~Todor~Stefanov,~\IEEEmembership{Member,~IEEE}	\thanks{This article was presented in the International Conference on Embedded Software 2018 and appears as part of the ESWEEK-TCAD special issue.}
	\thanks{The authors contributed to the paper equally. The authors are with the Leiden Institute of Advanced Computer Science, Leiden University, Leiden, The Netherlands, Email: teddyzhai@gmail.com, \{s.niknam,t.p.stefanov\}@liacs.leidenuniv.nl. This research is supported by the Dutch Technology Foundation STW under the Robust Cyber Physical Systems program (Project 12695).}
}

\markboth{IEEE TRANSACTIONS ON COMPUTER-AIDED DESIGN OF INTEGRATED CIRCUITS AND SYSTEMS}{Shell \MakeLowercase{\textit{et al.}}: Bare Demo of IEEEtran.cls for IEEE Journals}

\maketitle
\begin{abstract}
	In real-time systems, the application's behavior has to be predictable at compile-time to guarantee timing constraints. However, modern streaming applications which exhibit adaptive behavior  due to mode switching at run-time, may degrade system predictability due to unknown behavior of the application during mode transitions. Therefore, proper temporal analysis during mode transitions is imperative to preserve system predictability.
	To this end, in this paper, we initially introduce Mode Aware Data Flow (MADF) which is our new predictable Model of Computation (MoC) to efficiently capture the behavior of adaptive streaming applications. Then, as an important part of
	the operational semantics of MADF, we propose the Maximum-Overlap Offset (MOO) which is our novel protocol for mode transitions. The main advantage of this transition protocol is that, in contrast to self-timed transition protocols, it 	avoids timing interference between modes upon mode transitions. As a result, any mode transition can be analyzed independently from the mode transitions that occurred in the past. Based on this transition protocol, we propose a hard real-time analysis as well to guarantee timing constraints by avoiding processor overloading during mode transitions. Therefore, using this protocol, we can derive a lower bound and an upper bound on the earliest starting time of the tasks in the new mode during mode transitions in such a way that hard real-time constraints are respected.
\end{abstract}

\IEEEpeerreviewmaketitle

\section{Introduction}\label{sec:Introduction}

\IEEEPARstart{T}{o} handle the ever-increasing computational demands and meet hard real-time constraints in streaming applications, where the huge amount of streaming data should be processed in a short time interval, embedded systems have relied on Multi-Processor System-on-Chip (MPSoC) platforms to benefit from parallel processing. To efficiently exploit the computational capacity of MPSoCs, however, streaming applications must be expressed primarily in a parallel fashion. The common practice for expressing the parallelism in an application is to use parallel Models of Computation (MoCs) \cite{Gerstlauer_2009}. Within a parallel MoC, a streaming application is modeled as a directed graph, where graph nodes represent actors (i.e., tasks) and graph edges represent data dependencies. Actors are executed concurrently and communicate data explicitly via FIFOs. For example, Synchronous Data Flow (SDF) \cite{Lee_1987} and Cyclo-Static Data Flow (CSDF) \cite{Bilsen1996} are two popular parallel MoCs because of their compile-time analyzability. Due to the static nature of SDF and CSDF MoCs, the actors are restricted to produce and consume data with fixed rates per firing or, in case of CSDF, with fixed periodic patterns. 

Nowadays, many modern streaming applications, in the domain of multimedia, image, and signal processing, increasingly show adaptive behavior at run-time. For example, a computer vision system processes different parts of an image continuously to obtain information from several regions of interest depending on the actions taken by the external environment. This adaptive behavior, however, cannot be effectively expressed with an SDF or CSDF model due to their limited expressiveness.
As a result, more expressive models, e.g., Scenario-Aware Data Flow (SADF) \cite{Theelen2006_memocode}, Finite State Machine (FSM)-based Scenario-Aware Data Flow (FSM-SADF) \cite{Geilen2010_codes10}, Variable-rate Phased Data Flow (VPDF)\cite{Wiggers2011_tecs}, and Mode-controlled Data Flow (MCDF)\cite{Moreira2012}, have been proposed and deployed as extensions of the (C)SDF model. These MoCs are able to capture the behavior of an adaptive streaming application as a collection of different static behaviors, called scenarios or modes, which are individually predictable in performance and resource usage at compile-time.

Moreover, to guarantee tight timing constraints in modern streaming applications with adaptive behavior nature, proper temporal analysis for application execution during mode transitions, when the application's behavior is switching from one mode to another mode, is imperative at compile-time. However, such analysis can be difficult due to the fact that different actors in different modes are concurrently executing during mode transitions. This difficulty comes directly from the protocol adopted for the mode transitions. In the existing adaptive MoCs, like MCDF \cite{Moreira2012} and FSM-SADF \cite{Geilen2010_codes10}, a protocol, referred as self-timed transition protocol, has been adopted which specifies that actors are scheduled as soon as possible not only in each mode individually, but also during mode transitions. This protocol, however, introduces interference of one mode execution with another one, as explained in \SEC~\ref{sec:st_trans}. As a consequence, the temporal analysis of a mode transition is tightly dependent on the mode transitions that occurred in the past. Another consequence of the incurred interference between modes is the high time complexity of analyzing mode transitions, as the mode transitions cannot be analyzed independently, e.g., see the state-space exploration approach proposed in \cite{Geilen2010_codes10}.

Therefore, to overcome the aforementioned interference issue and consequent problems caused by the self-timed transition protocol, in this paper, we propose a new MoC called Mode Aware Data Flow (MADF) to model adaptive streaming applications, that is armed by a novel transition protocol called Maximum-Overlap Offset. This transition protocol enables an independent analysis for mode transitions. The specific novel contributions of this paper are the following:
\begin{itemize}
	\item  We propose a new MoC, Mode-Aware Data Flow (MADF), that has the advantages of SADF \cite{Theelen2006_memocode} and VPDF \cite{Wiggers2011_tecs}. Inspired by SADF, we characterize the behavior of adaptive streaming applications with individual modes 	and transitions between them. Similar to VPDF, the length of production/consumption sequences for an actor varies from one mode to another. The length is only fixed when the mode is known. Then, based on the clear distinction between modes and transitions, we define analyzable operational semantics for MADF;
	\item As an important part of the operational semantics of MADF, we propose the Maximum-Overlap Offset (MOO) which is our novel protocol for mode transitions. The main advantage of this transition protocol is that, in contrast to the self-timed transition protocol, adopted in \cite{Geilen2010_codes10,Moreira2012}, it avoids timing interference between modes upon mode transitions. 
	As a result, this transition protocol enables an independent analysis for mode transitions. This means, the analysis of any mode transition is independent from the mode transitions that occurred in the past. This independent  analysis significantly reduces the complexity of the analysis as the complexity merely depends on the number of allowed transitions. This is crucial for applications with a large number of modes and possible transitions;
	\item Based on the novel MOO transition protocol, we propose a hard real-time analysis approach to guarantee the timing constraints by avoiding processor overloading, i.e., avoiding that the total utilization of allocated tasks on a processor exceeds its capacity, during mode transitions. Our analysis is much simpler and faster than the computationally intensive state-of-the-art timing analysis approaches such as \cite{Geilen2010_codes10}.
\end{itemize}

The remainder of this paper is organized as follows: Section \ref{sec:Related Work} gives an overview of the related work. Section \ref{background} introduces the background needed for understanding the contributions of this paper. Our novel adaptive MoC and transition protocol are then introduced in Section \ref{sec:dyn_moc_def}. Based on the novel transition protocol, in Section \ref{sec:hrt_sacsdf}, we present our hard real-time analysis approach to guarantee the timing constraints during mode transitions. In Section \ref{sec:Case Study}, two case studies are presented to illustrate the practical applicability of our proposed MADF mode, transition protocol, and real-time analysis. Finally, Section \ref{sec:Conclusion} ends the paper with conclusions.
 \section{Related Work} \label{sec:Related Work}

To model the adaptive behavior of modern streaming applications while having certain degree of compile-time analyzability, different MoCs such as Scenario-Aware Data Flow (SADF) \cite{Theelen2006_memocode},
 Finite State Machine (FSM)-based Scenario-Aware Data Flow (FSM-SADF) \cite{Geilen2010_codes10}, Variable-rate Phased Data Flow (VPDF) \cite{Wiggers2011_tecs}, Mode-controlled Data Flow (MCDF) \cite{Moreira2012}, and Parameterized SDF (PSDF) \cite{Bhattacharya2001} have been already proposed in the literature.

In SADF \cite{Theelen2006_memocode} and FSM-SADF \cite{Geilen2010_codes10}, \textit{detector} actors are introduced to parameterize the SDF model. All valid scenarios and their possible order of occurrence, which is shown either by using a Markov chain \cite{Theelen2006_memocode} or finite state machine \cite{Geilen2010_codes10}, must be predefined at compile-time. Each scenario consists of a set of valid parameter combination that determines a scenario of SADF. This guarantees the consistency of SADF in individual scenarios, therefore, no run-time consistency check is required. In a scenario, the SADF model behaves the same way as the SDF model. Therefore, an SADF graph can be seen as a set of SDF graphs. In the initial FSM-SADF definition, all the production and consumption rates of the data-flow edges are constant within a graph iteration of a scenario. 

For the FSM-SADF MoC \cite{Geilen2010_codes10}, the authors proposed an approach to compute worst-case performance among all mode transitions, assuming the self-timed transition protocol. Although it is an exact analysis, the approach has inherently exponential time complexity. Moreover, this approach leads to timing interference between modes upon mode transitions. In contrast, our approach does not introduce interference between modes due to the novel MOO transition protocol proposed in \Cref{sec:moo_trans}. The timing behavior of individual modes and during mode transitions can be analyzed independently. In addition, our approach considers allocation of actors on processors, which by itself is a harder problem than the one addressed in \cite{Geilen2010_codes10}.

In \cite{Geilen2011_tecs}, the author proposes to use a linear model to capture worst-case transition delay and period during scenario transitions of FSM-SADF.
Our transition protocol is conceptually similar to the linear model. However, we obtain the linear model in a different way, specifically simplified for the adopted hard real-time scheduling framework. For instance, finding a reference schedule is not necessary in our case, but being crucial in the tightness of the analysis proposed in \cite{Geilen2011_tecs}. Moreover, our approach solves the problem of changing the application graph structure during mode transitions, which was not studied in \cite{Geilen2011_tecs}.

For VPDF~\cite{Wiggers2011_tecs}, the analysis has been limited to computing buffer sizes under throughput constraints so far.
The execution of a VPDF graph on MPSoC platforms under hard real-time constraints has not been studied. In particular, the allocation of actors and how to switch from one mode to another one are not discussed.
Moreover, delay due to mode transitions has not been investigated.
Our approach, on the other hand, takes these important factors into account. Therefore, our analysis results are directly reflected in a real implementation.

MCDF \cite{Moreira2012} is another adaptive MoC which properties can be partly analyzed at compile-time.
The MCDF MoC primarily focuses on Software-Defined Radio applications, where different sub-graphs need to be active in different modes.
This is achieved by using \textit{switch} and \textit{select} actors.
The author implicitly assumes self-timed scheduling during mode transitions.
Based on this assumption, a worst-case timing analysis is developed.
Similar to the case of SADF, the use of the self-timed scheduling introduces timing interference between modes.
As a consequence, the analysis must take into account the sequence of mode transitions of interest.
Although the author provides an upper bound of timing behavior for a parameterized sequence of mode transitions, the accuracy is still unknown.
In contrast, our approach results in a timing analysis of mode transitions that is independent from already occurred transitions.
Moreover, the analysis results are directly reflected in the final implementation.
In this sense, our analysis is exact in the timing behavior of mode transitions.

In \cite{Bhattacharya2001}, a meta-modeling technique is proposed to augment the expressive power of wide range of existing data-flow models which have the graph iteration concept. In \cite{Bhattacharya2001}, the proposed technique is especially applied to the SDF model which is called Parameterized SDF (PSDF). In PSDF, separate \textit{init} and \textit{sub-init} graphs are proposed to reconfigure the body graph in a hierarchical manner. In this model, functional properties can only be partially decided at compile-time, and thus run-time verification is needed. To this end, for all configurations, computing a schedule and verifying consistency for both graphs and specifications need to be fulfilled at run-time which is pretty complex procedure. In addition, temporal analysis to find the worst-case system reconfiguration delay to preserve model predictability is not proposed. In contrast, our MADF model does not require run-time consistency check as every mode in our model is predefined at compile-time and represented as a CSDF graph. In addition, our MADF provides the temporal analysis of the mode transitions at compile-time using the MOO transition protocol.

In \cite{henia2007scenario,Negrean2011_etfa},
an analysis is proposed to reason about worst-case response time of
a task graph in case of a mode change. However, the task graph has very limited expressiveness and is not able to model the behavior of adaptive streaming applications. Instead, in our paper, we define a more expressive MoC that is amenable to adaptive application behavior and real-time analysis.

In~\cite{Real2004, Stoimenov2009_date}, the authors focus on timing analysis for mode changes of real-time tasks.
The starting times of new mode tasks need to be delayed to avoid overloading of processors during mode changes.
In \cite{Real2004, Stoimenov2009_date}, however, it is assumed that tasks are independent.
The proposed algorithms are thus not applicable to adaptive MoCs, since the starting times of tasks in adaptive MoCs depend on each other due to data dependencies.
Moreover, the algorithms in~\cite{Real2004, Stoimenov2009_date} involve high computational complexity because fixed-point equations must be solved at every step in the algorithms. In contrast, in our paper, we propose an adaptive MoC and analysis for applications with data-dependent tasks, which is more realistic and applicable to wider range of real-life streaming applications. Moreover, our analysis is simpler with low computational and time complexity.

 \section{Background}\label{background}
In this section, we provide a brief overview of our system model, the CSDF
MoC, and the scheduling framework presented in \cite{Bamakhrama2011_emsoft}. This background is needed to understand the novel contributions of our work.

\subsection{System Model}
The considered MPSoC platforms in this work are homogeneous, i.e., they may contain multiple, but the same type of programmable Processing Elements (PEs) with distributed memories.
Moreover, the platform must be predictable, which means timing guarantees are provided on the response time of hardware components and OS schedulers.
The precision-timed (PRET)~\cite{Lickly2008_pret} platform is such an example.
On the software side, we assume partitioned scheduling algorithms, i.e, no migration of tasks between PEs is allowed. The considered scheduling algorithms on each PE include Fixed-Priority Preemptive Scheduling (FPPS) algorithms, such as RM \cite{Liu1973_hrt}, or dynamic scheduling algorithms, such as EDF \cite{Liu1973_hrt}.
\subsection{Cyclo-Static Data Flow (CSDF)}\label{CSDF graph}
An application modeled as a CSDF \cite{Bilsen1996} is defined as a directed graph $G = (\mathcal{A},\mathcal{E})$ that consists of a set of actors $\mathcal{A}$ which communicate with each other through a set of edges $\mathcal{E}$. Actors represent computation while edges represent data dependency due to communication and synchronization. In CSDF, every actor $A_i \in \mathcal{A}$ has an execution sequence $C_i=[c_1, c_2, \dots , c_{\phi_i}]$ of length $\phi_i$. This means, the $x-$th time that actor $A_i$ is fired, it performs the computation $C_i(((x-1) \mod \phi_i )+1)$. Similarly, production and consumption of data tokens are also sequences of length $\phi_i$ in CSDF. The token production of actor $A_i$ to edge $E_j$ is represented as a sequence of constant integers $PRD_j=[prd_1,prd_2,\dots,prd_{\phi_i}]$, called \textit{production sequence}. Analogously, token consumption from every input edge $E_k$ of actor $A_i$ is a predefined sequence $CNS_k=[cns_1,cns_2,\dots,cns_{\phi_i}]$, called \textit{consumption sequence}. The $x-$th time that actor $A_i$ is fired, it produces $PRD_j(((x-1) \mod \phi_i) + 1)$ tokens to channel $E_j$ and consumes $CNS_k(((x-1) \mod \phi_i) + 1)$ tokens from channel $E_k$.

An important property of the CSDF model is the ability to derive a schedule for the actors at compile-time. In order to derive a valid static schedule for a CSDF graph at compile-time, it has to be consistent and live.
\begin{theorem}[From \cite{Bilsen1996}]\label{repetition}
	In a CSDF graph $G$, a repetition vector $\vec{q}=[q_1, q_2, \cdot\cdot\cdot, 
	q_{|\mathcal{A}|}]^T$ is given by
	\begin{equation}\label{repetition_vector}
		\small
	\vec{q}=\Theta\cdot\vec{r} ~~~~with~~~~
	\Theta_{j,i}=\begin{cases} \phi_i &\quad if~j = i\\
	0 &\quad otherwise\end{cases}
	\end{equation}\label{balance_equation}
	where 	 $\vec{r}=[r_1, r_2, ... , r_{|\mathcal{A}|}]^T$ is a positive integer solution of the balance equation $\Gamma\cdot\vec{r}= \vec{0}$ and where the topology matrix $\Gamma \in \mathbb{Z}^{|\mathcal{E}|\times|\mathcal{A}|}$ is defined by
	
$		\small\Gamma_{j,i}=\begin{cases}\sum_{k=1}^{k=\phi_i}PRD_j(k) & if~actor~A_i~produces~to~edge~E_j\\-\sum_{k=1}^{k=\phi_i}CNS_j(k) & if~actor~A_i~consumes~from~edge~E_j\\0 &\quad otherwise.\end{cases}$
\end{theorem}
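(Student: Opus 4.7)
The plan is to identify the two invariance conditions that any genuine repetition vector for a cyclo-static graph must satisfy, and then to show that they translate exactly into the factorization $\vec{q}=\Theta\vec{r}$ together with the linear system $\Gamma\vec{r}=\vec{0}$. Concretely, $\vec{q}$ is a repetition vector iff firing every actor $A_i$ exactly $q_i$ times returns the complete graph state to its starting configuration, and this state splits naturally into (a) the phase of every cyclo-static actor (i.e., its position in $C_i$, $PRD_j$, and $CNS_k$) and (b) the number of tokens residing on every edge.

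First I would handle condition (a). Because actor $A_i$ advances its phase by one modulo $\phi_i$ per firing, the only way its phase can return to the initial value after $q_i$ firings is that $\phi_i \mid q_i$. Setting $r_i := q_i/\phi_i \in \mathbb{Z}_{>0}$, we obtain $q_i = \phi_i\, r_i$ for each $i$, which, stacked into a vector, is precisely $\vec{q}=\Theta\vec{r}$ for the diagonal matrix $\Theta$ defined in the statement.

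Next I would handle condition (b). Fix an edge $E_j$ from producer $A_i$ to consumer $A_l$. Because $q_i = \phi_i\, r_i$ firings of $A_i$ traverse the length-$\phi_i$ sequence $PRD_j$ exactly $r_i$ times, the total number of tokens placed on $E_j$ per iteration is $r_i \sum_{k=1}^{\phi_i} PRD_j(k)$; symmetrically $A_l$ removes $r_l \sum_{k=1}^{\phi_l} CNS_j(k)$ tokens. Requiring the per-edge net change to vanish for each $j$ yields exactly row $j$ of $\Gamma\vec{r}=\vec{0}$ under the definition of the topology matrix given in the theorem, thereby combining with step (a) to produce the claimed formula.

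I expect the main obstacle to be the careful justification of condition (a): one has to argue that partial progress through an actor's cyclo-static sequence is a genuinely distinct state, so that the graph state truly does not repeat unless $\phi_i$ divides $q_i$, and one must then rule out spurious solutions (e.g., mixed-sign or rational $\vec{r}$) by invoking the consistency/connectedness assumption on $G$ so that $\ker\Gamma$ admits a unique (up to scaling) positive-integer generator. These are standard ingredients, but they need to be spelled out explicitly to arrive at the specific factored form $\vec{q}=\Theta\vec{r}$ rather than a single opaque linear system on $\vec{q}$ itself.
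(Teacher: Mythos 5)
The paper does not prove this statement at all: it is quoted verbatim as a background result from the cited CSDF reference (Bilsen et al.), with the reader directed there for details, so there is no in-paper proof to compare against. That said, your sketch is essentially the standard derivation from that reference and is correct in outline: the state of a CSDF graph decomposes into per-actor phase counters and per-edge token counts, phase return forces $\phi_i \mid q_i$ (hence $\vec{q}=\Theta\vec{r}$ with $r_i=q_i/\phi_i$), and vanishing net token change on each edge is exactly $\Gamma\vec{r}=\vec{0}$. Two small calibration points. First, the theorem as stated only asserts the ``sufficiency'' direction --- that any $\vec{q}=\Theta\vec{r}$ built from a positive integer solution $\vec{r}$ of the balance equation is a repetition vector --- whereas your argument is phrased mostly in the ``necessity'' direction; both follow from the same two computations, but the direction you actually need is the easier one (if $\phi_i\mid q_i$ the phase returns, and $(\Gamma\vec{r})_j=0$ means the token count on $E_j$ returns). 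Second, your worry about ruling out mixed-sign or rational solutions and about $\ker\Gamma$ having a unique positive generator concerns existence and minimality of the repetition vector, which the theorem does not claim (it presupposes a positive integer solution $\vec{r}$ and the paper separately defines consistency as the existence of such a solution), so that machinery is not needed here.
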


A CSDF graph $G$ is said to be consistent if a positive integer solution $\vec{r}=[r_1, r_2, ... , r_{|\mathcal{A}|}]^T$ exists for the balance equation in \EQ~(\ref{balance_equation}). If a deadlock-free schedule can be found, $G$ is said to be live. Each consistent CSDF graph has a non-trivial repetition vector $\vec{q}=[q_1, q_2, \cdot\cdot\cdot, q_{|\mathcal{A}|}]^T\in \mathbb{N}^{|\mathcal{A}|}$. An entry $q_i \in \vec{q}$ denotes how many times an actor $A_i \in \mathcal{A}$ has to be executed in every graph iteration of $G$. For more details, we refer the reader to \cite{Bilsen1996}.

\subsection{Strictly Periodic Scheduling of CSDF}\label{SPS}
In \cite{Bamakhrama2011_emsoft}, a real-time strictly periodic scheduling (SPS) framework for CSDF graphs is proposed. In this framework, the actors in a CSDF graph are converted to a set of real-time implicit-deadline periodic tasks. Therefore, such a real-time task corresponding to a CSDF actor is associated with two parameters, namely period $T$ and earliest starting time $S$, where the deadline of the task is equal to its period (i.e., implicit deadline). 
The minimum period $T_i$ \cite{Bamakhrama2011_emsoft} of any actor $A_i\in \mathcal{A}$ under SPS can be computed as:
\begin{equation}\label{period}
T_i=\frac{lcm(\vec{q})}{q_i} \bigg\lceil \frac{\max_{A_i \in \mathcal{A}} \{\mu_i q_i\}}{lcm(\vec{q})}\bigg\rceil,\end{equation}
where $q_i$ is the number of repetitions of actor $A_i$ per graph iteration, and $\mu_i$ is the worst-case execution time (WCET) of actor $A_i$. In general, the derived period vector $\vec{T}$ must satisfy the condition $q_1T_1 = q_2T_2 = \cdot\cdot\cdot = q_nT_n = H$, where $H$ is the iteration period, also called hyper period, that represents the duration needed by the graph to complete one iteration. The minimum period of the sink actor for a CSDF graph determines the maximum throughout that this graph can achieve.
In addition, the utilization of any actor $A_i\in \mathcal{A}$, denoted by $u_i$, can be computed as $u_i = \mu_i/T_i$, where $u_i \in (0, 1]$.

\begin{figure}[!t!b]
	\centering
	\includegraphics[width=1\columnwidth]{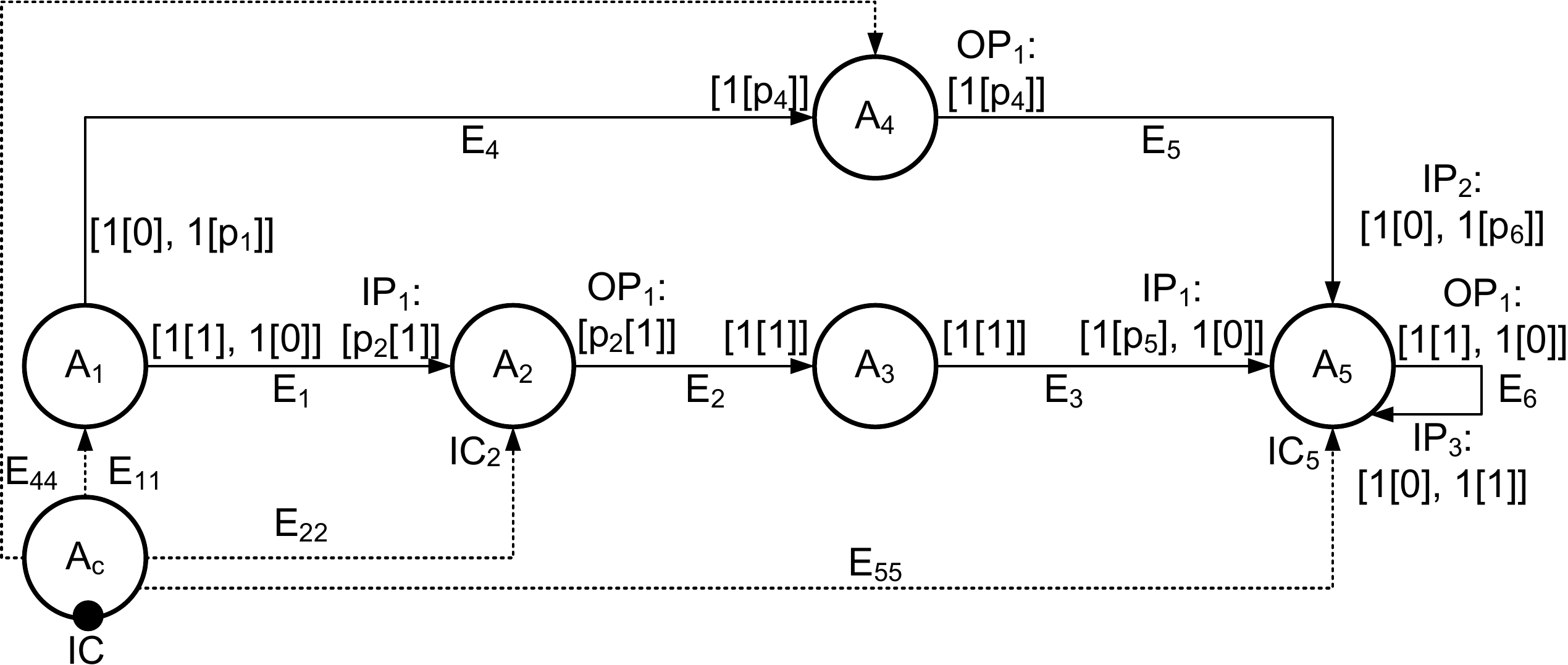}
	\caption{An example of \SADF{} graph ($G_1$).}
	\label{fig:sacsdf_ex}
\end{figure}
To sustain a strictly periodic execution with the period derived by \EQ~(\ref{period}), the earliest starting time $S_i$ \cite{Bamakhrama2011_emsoft} of any actor $A_i\in \mathcal{A}$ can be obtained as:
\begin{equation}\label{start_time}
S_i=\begin{cases} 0 &\quad if ~\pred(A_i) = \emptyset\\
\max_{A_j \in \pred(A_i)}(S_{j\rightarrow i}) &\quad otherwise, \end{cases}
\end{equation}
where $\pred(A_i)$ represents the set of predecessor actors of $A_i$ and $S_{j \rightarrow i}$ is given by:
\begin{equation}\label{continue_start_time}
\begin{split}
S_{j\rightarrow i}= &\underset{t \in[0,S_j+H]}{\min} \Big\{t: \underset{[S_j,\max\{S_j,t\}+k)}{\prd}(A_j,E_u)\\ &\geq \underset{[t,\max\{S_j,t\}+k]}{\cns}(A_i,E_u),~ \forall k\in[0,H],k\in\mathbb{N}\Big\}
\end{split}
\end{equation}
where $\prd_{[t_s,t_e)}(A_j,E_u)$ is the total number of tokens produced by $A_j$ to edge $E_u$ during the time interval $[t_s,t_e)$ and $\cns_{[t_s,t_e]}(A_i,E_u)$ is the total number of tokens consumed by $A_i$ from edge $E_u$ during the time interval $[t_s,t_e]$. \EQ~(\ref{continue_start_time}) considers the dependency between actors $A_j$ and $A_i$, over directed channel $E_u$. It calculates the earliest starting time $S_{j\rightarrow i}$ such that $A_i$ is never blocked on reading data tokens from $E_u$ during its periodic execution. This is ensured by checking that at each time instant, actor $A_i$ can be fired such that the cumulative number of tokens produced by $A_j$ over $E_u$ is greater than or equal to the number of tokens $A_i$ consumes from $E_u$. Start times $S_{j\rightarrow i}$ are computed for each actor $A_j$ in the predecessor set of $A_i$, i.e., $A_j \in\pred(A_i)$. Then, when actor $A_i$ has several predecessors, the earliest starting time $S_i$ has to be set to the maximum of starting times $S_{j\rightarrow i}$ considering each predecessor in isolation, as captured by \Cref{start_time}. For more details, we refer the reader to \cite{Bamakhrama2011_emsoft}.

 \section{Mode-Aware Data Flow~(\SADF)}\label{sec:dyn_moc_def}

In this section, we introduce our new MoC called Mode-Aware Data Flow (\SADF{}). \SADF{} can capture multiple modes associated with an adaptive streaming application, where each individual mode is a CSDF graph \cite{Bilsen1996}. Details and formal definitions of the \SADF{} model and its operational semantics are given later in this section. Here, we explain the \SADF{} intuitively by an example. Throughout this paper, we use graph $G_1$ shown in \FIG~\ref{fig:sacsdf_ex} as the running example to illustrate the definition of \SADF{} and the hard real-time scheduling analysis related to \SADF{}. This graph consists of 5 computation actors $A_1$ to $A_5$ that communicate data over edges $E_1$ to $E_5$. Also, there is an extra actor $A_c$ which controls the switching between modes through control edges $E_{11}$, $E_{22}$, $E_{44}$, and $E_{55}$ at run-time. Each edge contains a production and a consumption pattern, and some of these production and consumption patterns are parameterized. Having different values of parameters and worst-case execution times (WCET) of the actors determine different modes. For example, to specify the consumption pattern with variable length on edge $E_1$ in graph $G_1$, the parameterized notation $[p_2[1]]$ is used on edge $E_1$ that is interpreted as a sequence of $p_2$ elements with integer value $1$, e.g., $[2[1]] = [1,1]$. Similarly, the notation $[1[p_4]]$ on edge $E_4$ is interpreted as a sequence of $1$ element with integer value $p_4$, e.g., $[1[2]] = [2]$. Assume in this particular example that parameter vector ($p_1,p_2,p_4,p_5,p_6$) can take only two values (0, 2, 0, 2, 0) and (1, 1, 1, 1, 1). Then, $A_c$ can switch the application between two corresponding modes $\textit{SI}^1$ and $\textit{SI}^2$ by setting the parameter vector to value (0, 2, 0, 2, 0) and (1, 1, 1, 1, 1), respectively, at run-time. \FIG~\ref{fig:sacsdf_sn}(a) and (b) show the corresponding CSDF graphs of mode $\textit{SI}^1$ and $\textit{SI}^2$.
\begin{figure}[!t!b]
	\centering
	\subfigure[CSDF graph $G_1^1$ of mode $\textit{SI}^1$.]
	{\label{fig:sacsdf_sn1}\includegraphics[width=0.4\textwidth]{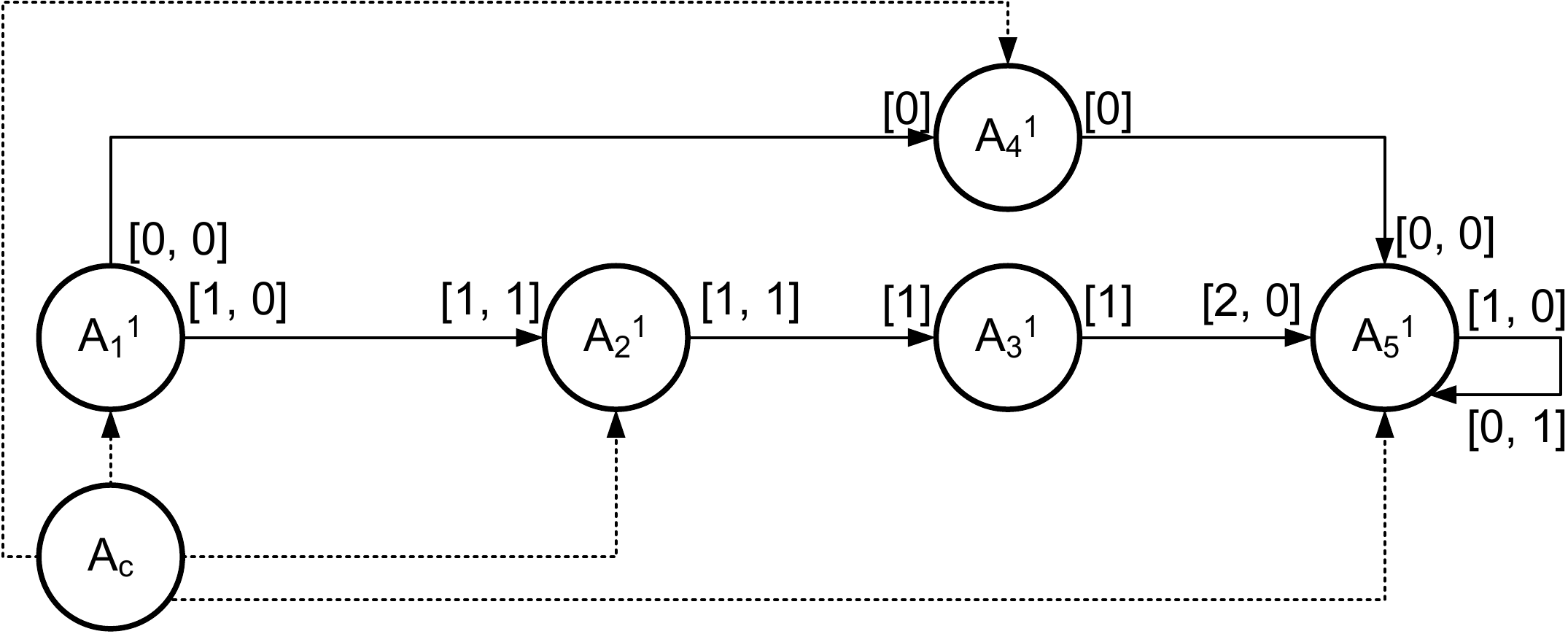}}
	\hspace{1mm}
	\subfigure[CSDF graph $G_1^2$ of mode $\textit{SI}^2$.]
	{\label{fig:sacsdf_sn2}\includegraphics[width=0.4\textwidth]{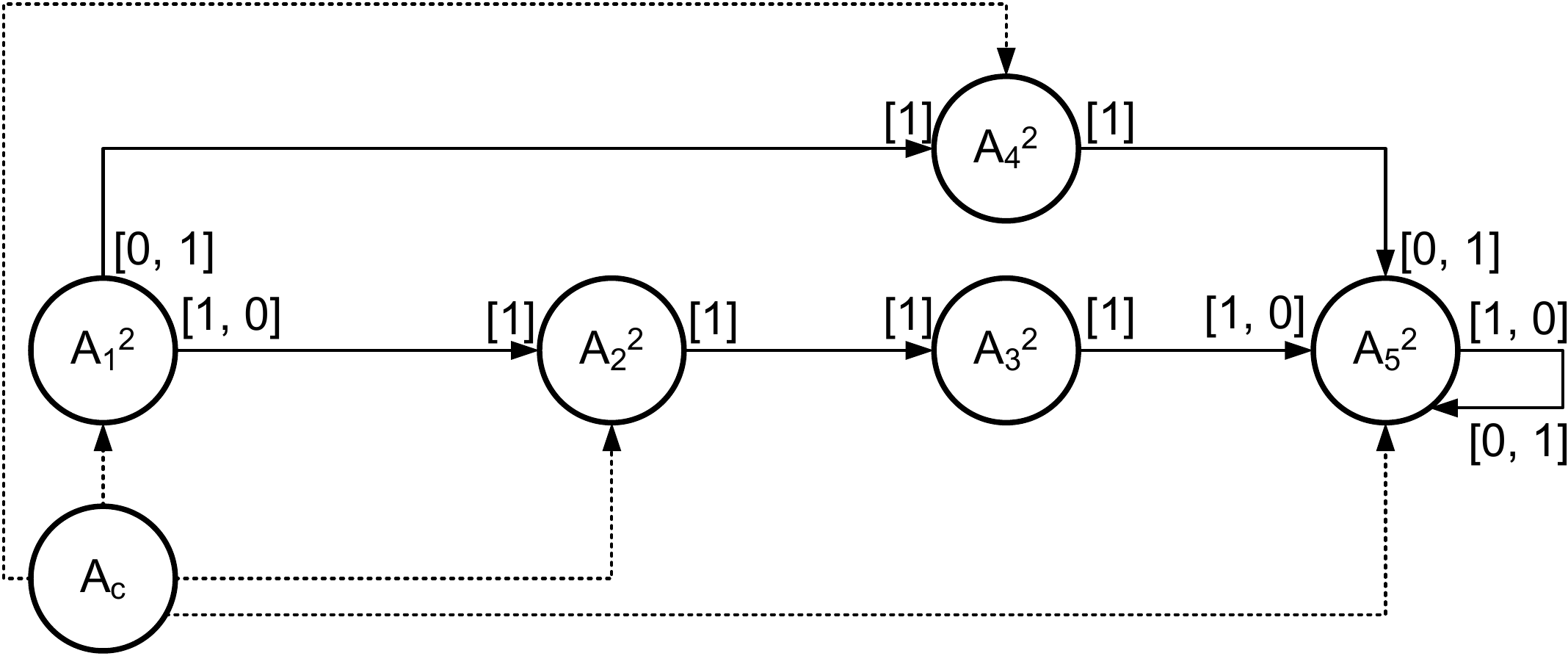}}
	\caption{Two modes of the \SADF{} graph in \FIG~\ref{fig:sacsdf_ex}.}
	\label{fig:sacsdf_sn}
\end{figure}
\subsection{Formal Definition of~\SADF}
\begin{definition}[Mode-Aware Data Flow~(\SADF)]\label{def:sacsdf}
	\index{Mode-Aware Data Flow~(\SADF)}\hfill \break
	A Mode-Aware Data Flow~(\SADF) is a multi-graph defined by a tuple $(\mathcal{A}, A_{c}, \mathcal{E},\Pi)$, where
	\begin{itemize}
		\item $\mathcal{A} = \{A_1, \dots, A_{|\mathcal{A}|}\}$ is a set of dataflow actors;
		\item $A_{c}$ is the {control} actor to determine modes and their transitions;
		\item $\mathcal{E}$ is the set of edges for data/parameter transfer;
		\item $\Pi = \{\vec{p}_1,\dots,\vec{p}_{|\mathcal{A}|}\}$ is the set of parameter vectors, where each $\vec{p}_i \in \Pi$ is
		associated with a dataflow actor $A_i$.
	\end{itemize}
\end{definition}

For $G_1$, $\mathcal{A} = \{A_1,A_2,A_3,A_4,A_5\}$ is the set of dataflow actors.
$A_{c}$ is the control actor.
$\mathcal{E} = \{E_1, E_2, E_3,E_4,E_5,E_6,E_{11}, E_{22}, E_{44}, E_{55}\}$ is the set of edges.
For actor $A_5$, $\vec{p}_5 = [p_5, p_6]$ is the parameter vector.
The input port $\textit{IP}_1$ of actor $A_5$ has a consumption sequence $[1[p_5], 1[0]]$, which can be interpreted as $[p_5,0]$.

\begin{definition}[Dataflow Actor]\label{def:d_a}
	A dataflow actor $A_i$ is descr\-ibed by a tuple $(\mathcal{I}_{i},\ {\textit{IC}}_\textit{i},\ \mathcal{O}_i, \mathcal{C}_i, M_i)$, where
	\begin{itemize}
		\item $\mathcal{I}_i = \{\textit{IP}_1,\dots,\textit{IP}_{|\mathcal{I}_i|}\}$ is the set of data input ports of actor $A_i$;
		\item ${\textit{IC}}_\textit{i}$ is the control input port that reads parameter vector $\vec{p}_i$ for actor $A_i$;
		\item $\mathcal{O}_i = \{\textit{OP}_1,\dots,\textit{OP}_{|\mathcal{O}_i|}\}$ is the set of data output ports of actor $A_i$;
		\item $\mathcal{C}_i = \{c_1, \dots, c_{|\mathcal{C}|}\}$ is the set of computations. When actor $A_i$ fires, it performs a
		computation $c_k \in \mathcal{C}_i$;
		\item ${M}_i~:~\vec{p}_i \rightarrow \{\phi, \bar{{C}}_i\}$ is a mapping relation, where $\vec{p}_i \in \Pi$, $\phi \in \mathbb{N}^+$, and
		$\bar{{C}}_i \subseteq C_i$ is a sequence of computations $[\bar{C}_i(1),\dots, \bar{{C}}_i(k),\dots,\bar{{C}}_i(\phi)]$ with
		$\bar{C}_i(k) \in \mathcal{C}_i, 1 \le k \le \phi$.
	\end{itemize}
\end{definition}
Actor $A_2$ in \FIG~\ref{fig:sacsdf_ex} has a set of one input port $\mathcal{I}_2 = \{\textit{IP}_1\}$, a set of one output port $\mathcal{O}_2 = \{\textit{OP}_1\}$ as well as a control input port $\textit{IC}_2$.
A set of computations $\mathcal{C}_2 = \{c_1, c_2, c_3\}$ is associated with $A_2$.
The mapping relation ${M}_2$ is given in \TAB~\ref{tab:ac_relation}.
\begin{table}[t]
	\centering
	\begin{minipage}{0.22\textwidth}
		\centering
		\caption{Mapping relation $M_2$ for actor $A_2$ in \FIG~\ref{fig:sacsdf_ex}.} \label{tab:ac_relation}
		\begin{tabular}{| c | c | c|}
			\hline
			$\vec{p}_2 = [p_2]$  & $\phi$       & \scriptsize{$\bar{C}_2$} \\ \hline \hline
			2                   & 2            & $[c_1,c_2]$            \\ \hline
			1                   & 1            & $[c_3]$                \\ \hline
		\end{tabular}
	\end{minipage}
	\hspace{1mm}
	\begin{minipage}{0.22\textwidth}
		\footnotesize
		\centering
	    \caption{Function $\textit{MC}_5$ defined for actor $A_5$ in \FIG~\ref{fig:sacsdf_ex}.}\label{tab:c_func_a5}
		\begin{tabular}{ |c | c | c | c | c | c | c | c |}
			\hline
			~~~ $\mathcal{S}$   ~~~~ & ~~~ $\mathbb{N}^2 $ ~~~ \\ \hline \hline
			~~~~ $\textit{SI}^1$ ~~~~ & ~~~~ $[2,0]$ ~~~~      \\ \hline
			~~~~ $\textit{SI}^2$ ~~~~ & ~~~~ $[1,1]$ ~~~~      \\ \hline
		\end{tabular}
	\end{minipage}
\end{table}
It can be interpreted as follows:
If $p_2 = 2$, actor $A_2$ repetitively performs computations according to sequence $\bar{C}_2 = [c_1,c_2]$ every time when firing $A_2$.
When $p_2 = 1$, firing $A_2$ performs computation $c_3$.

\begin{definition}[Control Actor]\label{def:a_c}
	The control actor $A_{c}$ is descr\-ibed by a tuple $({\textit{IC}}, \mathcal{O}_c, \mathcal{S}, \mathcal{M}_c)$, where
	\begin{itemize}
		\item $\mathcal{S} = \{\textit{SI}^1,\dots,\textit{SI}^{|\mathcal{S}|}\}$ is a set of mode identifiers, each of which
		specifies a unique mode;
		\item $\textit{IC}$ is the control input port which is connected to the external environment.
		Mode identifiers are read through the control input port from the environment;
		\item $\mathcal{O}_c = \{\textit{OC}_1, \dots, \textit{OC}_{|\mathcal{A}|}\}$ is a set of control output ports.
		Parameter vector $\vec{p}_i$ is sent through $\textit{OC}_i \in \mathcal{O}_c$ to actor $A_i$;
		\item $\mathcal{M}_c = \{\textit{MC}_1,\dots,\textit{MC}_{|\mathcal{A}|}\}$ is a set of functions defined for each actor $A_i \in \mathcal{A}$.
		For each $\textit{MC}_i \in \mathcal{M}_c$, $\textit{MC}_i~:~\mathcal{S} \rightarrow \mathbb{N}^{|\vec{p}_i|}$
		is a function that takes a mode identifier and outputs a vector of non-negative integer values.
	\end{itemize}
\end{definition}
For $G_1$ in \FIG~\ref{fig:sacsdf_ex}, we have two mode identifiers $\mathcal{S} = \{\textit{SI}^1,\textit{SI}^2\}$.
At run-time, control actor $A_c$ reads these mode identifiers through control port \textit{IC} (black dot in \FIG~\ref{fig:sacsdf_ex}).
For actor $A_5$, $\textit{MC}_5 \in \mathcal{M}_c$ is given in \TAB~\ref{tab:c_func_a5}.
As explained previously, the parameter vector for actor $A_5$ is $\vec{p}_5 = [p_5, p_6]$.
Therefore, $\textit{MC}_5$ takes a mode identifier and outputs a 2-dimensional vector as shown in the second column in \TAB~\ref{tab:c_func_a5}.
For instance, mode $\textit{SI}^1$ results in a non-negative integer vector $[2,0]$.

To further define production/consumption sequences with variable length, we use the notation $n [m]$
for a sequence of $n$ elements with integer value $m$, i.e., 
\begin{align*}n [m] = [\overbrace{m,\dots,m}^ {n \text{ times }}].\end{align*}

\begin{definition}[Input Port]\label{def:sacsdf_ip}
	An {input port} {\textit{IP}} of an actor is described by a tuple (\textit{CNS}, $M_\textit{IP}$), where
	\begin{itemize}
		\item $\textit{CNS} = [\phi_1 [\textit{cns}_1],\dots,  \phi_K [\textit{cns}_K]]$ is the consumption sequence with $\phi$ phases,
		where $\phi = \sum_{i=1}^K \phi_i$ is determined by the mapping relation $M$ in \DEF~\ref{def:d_a}, and $\textit{cns}_1,\dots,\textit{cns}_K \in \mathbb{N}$;
		\item $M_\textit{IP}~:~\vec{p}_i \rightarrow \psi_\textit{IP}$ is a mapping relation, where $\vec{p}_i \in \Pi$
		and
		\begin{equation}\label{eq:sacsdf_ip}
		\psi_\textit{IP} = \{\phi_1,\dots,\phi_K,\textit{cns}_1,\dots,\textit{cns}_K\}.
		\end{equation}
	\end{itemize}
\end{definition}
\begin{definition}[Output Port]\label{def:sacsdf_op}
	An {output port} \textit{OP} of an actor is described by a tuple ($\textit{PRD}, M_\textit{OP}$), where
	\begin{itemize}
		\item $\textit{PRD}=[\phi_1 [\textit{prd}_1],\dots,  \phi_K [\textit{prd}_K]]$ is the production sequence with $\phi$ phases,
		where $\phi = \sum_{i=1}^K \phi_i$ is determined by the mapping relation $M$ in \DEF~\ref{def:d_a}, and $\textit{prd}_1,\dots,\textit{prd}_K \in \mathbb{N}$.
		\item $M_\textit{OP}~:~\vec{p}_i \rightarrow \psi_\textit{OP}$ is mapping relation,
		where $\vec{p}_i \in \Pi$ and
		\begin{equation}\label{eq:sacsdf_op}
		\psi_\textit{OP} = \{\phi_1,\dots,\phi_K,\textit{prd}_1,\dots,\textit{prd}_K\}.
		\end{equation}
	\end{itemize}
\end{definition}

The consumption/production sequence defined here is a generalization of that for the CSDF MoC (see \Cref{CSDF graph}).
We can see that a CSDF actor has a constant $\phi$ phases in its consumption/production sequences, whereas the length of the phase of an \SADF{} actor is parameterized by $\phi = \sum_{i=1}^K \phi_i$.
In addition, the mapping relation $M_\textit{IP}/M_\textit{OP}$ must be provided by the application designer.
Consider the two input ports $\textit{IP}_1$ and $\textit{IP}_2$ of actor $A_5$ in \FIG~\ref{fig:sacsdf_ex}.
The mapping relations $M_{\textit{IP}_1}$ and $M_{\textit{IP}_2}$ are represented as follows:
\begin{equation}
\hspace{-0.4mm}
M_{\textit{IP}_1}~:~\vec{p}_5 = [p_5, p_6] \rightarrow\psi_{\textit{IP}_1} = \{ \phi_1,\phi_2,\textit{cns}_1, \textit{cns}_2 \} = \{ 1,1,p_5,0 \}, \label{eq:m_ip1}
\end{equation}
\begin{equation}
\hspace{-0.4mm}
M_{\textit{IP}_2}~:~\vec{p}_5 =[p_5, p_6] \rightarrow \psi_{\textit{IP}_2} = \{ \phi_1,\phi_2,\textit{cns}_1, \textit{cns}_2 \} = \{ 1,1,0,p_6 \}. \label{eq:m_ip2}
\end{equation}
It can be seen that parameter $p_5$ is mapped to $\textit{cns}_1$ of $\textit{IP}_1$,
parameter $p_6$ is mapped to $\textit{cns}_2$ of $\textit{IP}_2$, and $\phi_1$ and $\phi_2$ both are constant equal to 1.
Therefore, the consumption sequence of $\textit{IP}_1$ is $\textit{CNS} = [1[p_5], 1[0]]=[p_5,0]$
and the consumption sequence of $\textit{IP}_2$ is $\textit{CNS} = [1[0], 1[p_6]]=[0,p_6]$.
Similarly considering output port $\textit{OP}_1$ of actor $A_4$, its mapping relation $M_{\textit{OP}_1}$ is given as:
\begin{equation}
M_{\textit{OP}_1}~:~\vec{p}_4 = [p_4] \rightarrow \psi_{\textit{OP}_1} = \{ \phi_1,\textit{prd}_1 \} = \{ 1,p_4\}.
\end{equation}
In this case, parameter $p_4$ is mapped to $\textit{prd}_1$ and $\phi_1 = 1$.
Therefore, production sequence $\textit{PRD} = [1[p_4]]=[p_4]$ is obtained for $\textit{OP}_1$ of $A_4$.

\begin{definition}[Edge]
	An {edge} $E \in \mathcal{E}$ is defined by a tuple $\Big((A_i,\textit{OP}), (A_j, \textit{IP})\Big),$ where
	\begin{itemize}
		\item actor $A_i$ produces a parameterized number of tokens to edge $E$ through output port $\textit{OP}$;
		\item actor $A_j$ consumes a parameterized number of tokens from $E$ through input port $\textit{IP}$.
	\end{itemize}
\end{definition}
Considering edge $E_5$ in \FIG~\ref{fig:sacsdf_ex}, it connects output port $\textit{OP}_1$ of actor $A_4$ to input port $\textit{IP}_2$ of actor $A_5$.

\begin{figure}[!t!b]
	\centering
	\subfigure[Mode $\textit{SI}^1$ in \FIG~\ref{fig:sacsdf_sn1}.]
	{\label{fig:static_ex1}\includegraphics[width=0.235\textwidth]{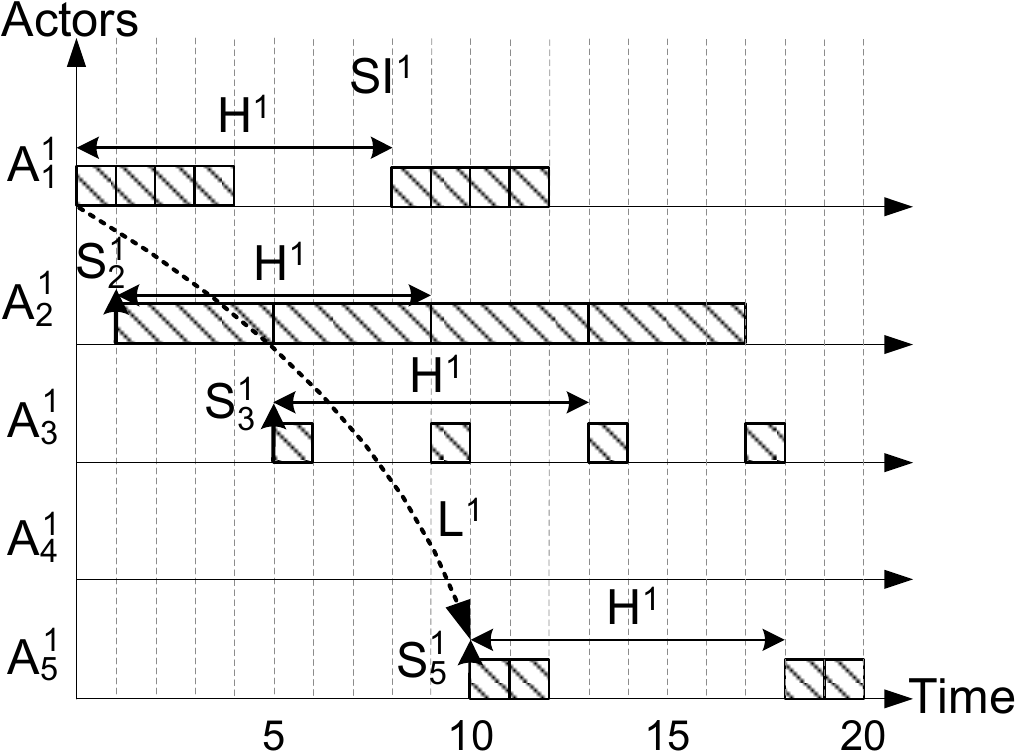}}
	\subfigure[Mode $\textit{SI}^2$ in \FIG~\ref{fig:sacsdf_sn2}.]
	{\label{fig:static_ex2}\includegraphics[width=0.235\textwidth]{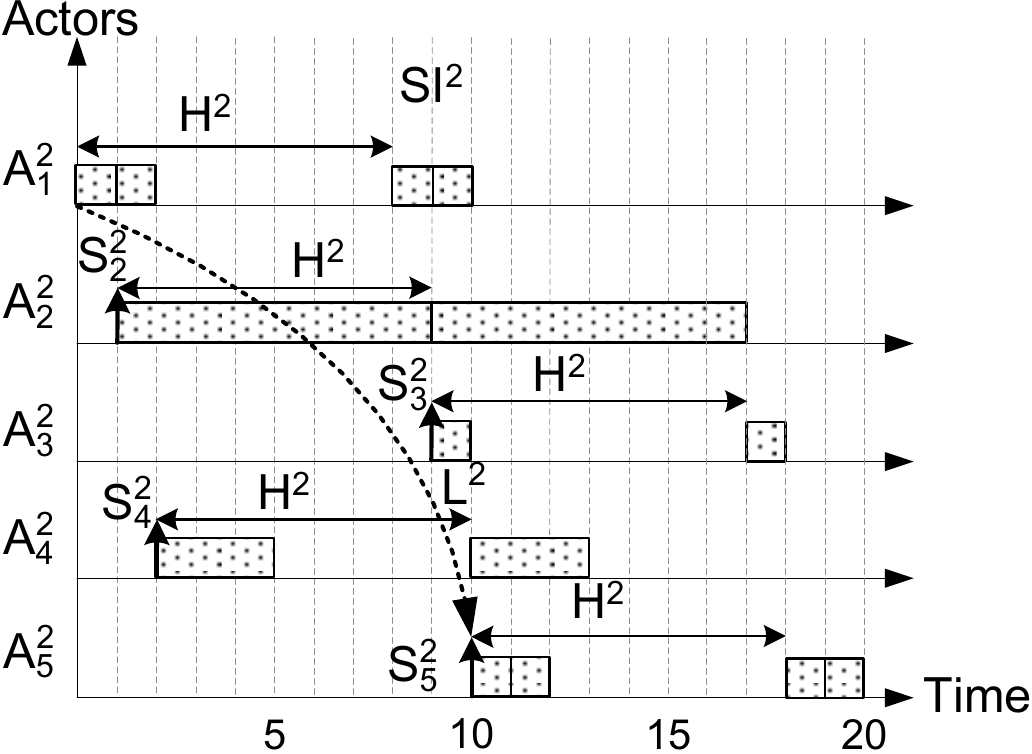}}
	\caption{Execution of two iterations of both modes $\textit{SI}^1$ and $\textit{SI}^2$ under self-timed scheduling.}
	\label{fig:static_ex}
\end{figure}
\begin{definition}[Mode of \SADF]\label{def:sacsdf_sn}
	A mode $\textit{SI}^i$ of \SADF{} is a consistent and live CSDF graph, denoted as $G^i$, obtained by setting values of $\Pi$ in \DEF~\ref{def:sacsdf} as follows:	
	\begin{equation}\label{eq:sacsdf_sn}
	\forall \vec{p}_k \in \Pi~:~ \vec{p}_k = \mc_k(\textit{SI}^i),
	\end{equation}
	where function $\textit{MC}_k$ is given in \DEF~\ref{def:a_c}.
\end{definition}
\begin{definition}[Mode of \SADF{} Actor]\label{def:sacsdf_sn_a}
	An actor $A_k$ in mode $\textit{SI}^i$, denoted by $A_k^i$, is a CSDF actor obtained from $A_k$ as follows:
	\begin{equation}\label{eq:a_param_sn}
	\vec{p}_k = \mc_k(\textit{SI}^i).
	\end{equation}
\end{definition}

\FIG~\ref{fig:sacsdf_sn1} shows the CSDF graph of mode $\textit{SI}^1$ and  \FIG~\ref{fig:sacsdf_sn2} shows the CSDF graph
of mode $\textit{SI}^2$.
Consider function $\textit{MC}_5$ for actor $A_5$ in \TAB~\ref{tab:c_func_a5} with parameter vector $\vec{p}_5 = [p_5,p_6]$.
For instance, mode $\textit{SI}^1$ results in $\vec{p}_5 = [p_5,p_6] = [2,0]$, where parameter values $p_5 = 2$ and $p_6=0$.
Consequently, according to mapping relations $M_{\textit{IP}_1}$ and $M_{\textit{IP}_2}$ given in \Cref{eq:m_ip1} and \Cref{eq:m_ip2}, $\textit{cns}_1 = p_5 = 2$ can be obtained for input port $\textit{IP}_1$ and $\textit{cns}_2 = p_6 = 0$ for $\textit{IP}_2$.
This determines actor $A_5^1$ shown in  \FIG~\ref{fig:sacsdf_sn1} for mode $\textit{SI}^1$.

\begin{definition}[Inactive Actor]
	An \SADF{} actor $A_i^k$ is inactive in mode $\textit{SI}^k$ if the following conditions hold:
	\begin{enumerate}
		\item $\forall \textit{IP} \in \mathcal{I}_i~:~ \textit{CNS} = [0,\dots,0]$;
		\item $\forall \textit{OP} \in \mathcal{O}_i~:~ \textit{PRD} = [0,\dots,0]$.
	\end{enumerate}
	Otherwise, $A_i^k$ is called \textit{active} in mode $\textit{SI}^k$.
\end{definition}

For actor $A_4^1$ shown in \FIG~\ref{fig:sacsdf_sn1}, it has consumption and production sequence $[0]$.
Therefore, actor $A_4$ is said to be inactive in mode $\textit{SI}^1$.

\subsection{Operational Semantics}\label{sec:sacsdf_opseman}
During execution of a \SADF{} graph, it can be either in a steady-state or mode transition.

\begin{definition}[Steady-state]\label{def:sacsdf_steady}
	A \SADF{} graph is in a steady-state of a mode $\textit{SI}^i$, if it satisfies \Cref{eq:sacsdf_sn} with the same $\textit{SI}^i$ for all its actors.
\end{definition}

\begin{definition}[Mode Transition]\label{def:sacsdf_trans}
	A \SADF{} graph is in a {mode transition} from mode $\textit{SI}^o$ to $\textit{SI}^l$, where $o \neq l$, if
	some actors have $\textit{SI}^o$ for \Cref{eq:a_param_sn} and the remaining active actors have $\textit{SI}^l$ for \Cref{eq:a_param_sn}.
\end{definition}

\begin{table}[!tb]
	\centering
	\caption{Actor parameter for $G_1$ in \FIG~\ref{fig:sacsdf_ex}.}
	\resizebox{0.47\textwidth}{!}{		\begin{tabular}{|c| c| c|c|c|c|c|c|c|c|c|}
			\hline
			Mode & \multicolumn{4}{|c|}{$\textit{SI}^1$} & \multicolumn{5}{|c|}{$\textit{SI}^2$}        \\ \hline \hline
			Actor & $A_1^1$ & $A_2^1$ & $A_3^1$ & $A_5^1$ & $A_1^2$ & $A_2^2$ & $A_3^2$ & $A_4^2$ & $A_5^2$ \\ \hline \hline
			WCET ($\mu_i$) & 1 & 4 & 1 & 1 			& 1 & 8 & 1 & 3 & 1 \\ \hline
			period ($T_i$) & 2 & 4 & 4 & 4 			& 4 & 8 & 8 & 8 & 4 \\ \hline
			starting time ($S_i$)      & 0 & 2 & 6 & 14 			& 0 & 4 & 12 & 8 & 20  \\\hline
			utilization ($u_i$)      & $\frac{1}{2}$ & $1$ & $\frac{1}{4}$ & $\frac{1}{4}$ 			& $\frac{1}{4}$ & $1$ & $\frac{1}{8}$ & $\frac{3}{8}$ & $\frac{1}{4}$  \\\hline
	\end{tabular}}
	\label{tab:sacsdf_param_ex2}
\end{table}
In the steady-state of a \SADF{} graph, all active actors execute in the same mode.
As defined previously in \Cref{def:sacsdf_sn} and shown in \FIG~\ref{fig:sacsdf_sn1} and \FIG~\ref{fig:sacsdf_sn2}, the steady-state of the \SADF{} graph has the same operational semantics as a CSDF graph.
We use $ \langle A_i^k, x \rangle$ to denote the $x$-th firing of actor $A_i$ in mode $\textit{SI}^k$.
At $ \langle A_i^k, x \rangle$, it executes computation $\bar{C}_i\big( ((x-1)\mod \phi) + 1 \big)$,
where $\bar{C}_i$ is given in \DEF~\ref{def:d_a}.
The number of tokens consumed and produced are specified according to Definitions~\ref{def:sacsdf_ip} and~\ref{def:sacsdf_op},
respectively.
For instance, the $x$-th firing of $A_i^k$ produces $\textit{PRD}\big( ((x-1) \mod \phi) + 1 \big)$ tokens through an output port $\textit{OP}$.
In each mode $\textit{SI}^k$, the \SADF{} graph is a consistent and live CSDF graph and thus has the notion of graph iterations with a non-trivial repetition vector $\vec{q}^k \in \mathbb{N}^{|\mathcal{A}|}$ resulting from \Cref{repetition_vector}.
Next, we further define mode iterations.
\begin{definition}[Mode Iteration]\label{def:sacsdf_rep}
	One iteration $\textit{It}^k$ of a \SADF{} graph in mode $\textit{SI}^k$ consists of
	one firing of control actor $A_{c}$ and
	$q_i^k \in \vec{q}^k$ firings of each \SADF{} actor $A_i^k$.
\end{definition}

Consider the two modes shown in \FIG~\ref{fig:sacsdf_sn1} and \FIG~\ref{fig:sacsdf_sn2}.
Repetition vectors $\vec{q}^1$ and $\vec{q}^2$ are:
\begin{align}
\vec{q}^1 = [4,2,2,0,2],  ~
\vec{q}^2 = [2,1,1,1,2]. \label{eq:sacsdf_q2}
\end{align}

For any mode of a \SADF{} graph, i.e., a live CSDF graph, under \textit{any} valid schedule, it has (eventually) periodic execution in time.
This holds for CSDF graphs under self-timed schedule~\cite{Stuijk2008_tc}, K-periodic schedule~\cite{Bodin2013_estimedia}, and SPS~\cite{Bamakhrama2011_emsoft}.
The length of the periodic execution, called \textit{iteration period}\index{iteration period~($H$)}, determines the minimum time interval to complete one graph iteration~(\textit{cf}. \DEF~\ref{def:sacsdf_rep}).
The iteration period, denoted by $H^k$, is equal for any actor in the same mode $\textit{SI}^k$.
During a periodic execution, the starting time of each actor $A_i^k$, denoted by $S_i^k$, indicates the time distance between
the start of source actor $A_\text{src}^k$ and the start of actor $A_i^k$ in the same iteration period.
Based on the notion of starting times, we define \textit{iteration latency} $L^k$ of a \SADF{} graph in mode $\textit{SI}^k$ as follows:
\begin{equation}\label{eq:iter_latency}
L^k = S_\text{snk}^k - S_\text{src}^k,
\end{equation}
where $S_\text{snk}^k$ and $S_\text{src}^k$ are the earliest starting times of the sink and source actors, respectively.
\FIG~\ref{fig:static_ex} illustrates the execution of both modes $\textit{SI}^1$ and $\textit{SI}^2$ given in \FIG~\ref{fig:sacsdf_sn} under the self-timed schedule.
A rectangle denotes the WCET of an actor firing.
The WCETs of all actors in both modes are given in the third row of \TAB~\ref{tab:sacsdf_param_ex2}.
Now, it can be seen in \FIG~\ref{fig:static_ex} that iteration period $H^1 = H^2 = 8$.
Based on the starting time of each actor, we obtain iteration latencies $L^1 = S_5^1 - S_1^1 = 10 - 0 = 10$ and $L^2 = S_5^2 - S_1^2= 10 - 0 = 10$ as shown in \FIG~\ref{fig:static_ex}.

\subsection{Mode Transition}\label{sec:sn_trans}
While the operational semantics of a \SADF{} graph in steady-state are the same as that of a CSDF graph,
the transition of \SADF{} graph from one mode to another is the crucial part that makes it fundamentally different from CSDF.
The protocol for mode transitions has strong impact on the compile-time analyzability and implementation efficiency.
In this section, we propose a novel and efficient protocol of mode transitions for \SADF{} graphs.

During execution of a \SADF{} graph, mode transitions may be triggered at run-time by receiving a
Mode Change Request (MCR) from the external environment.
We first assume that a MCR can be only accepted in the steady-state of a \SADF{} graph, not in an
ongoing mode transition.
This means that any MCR occurred during an ongoing mode transition will be ignored.
Consider a mode transition from $\textit{SI}^o$ to $\textit{SI}^l$.
The transition is accomplished by the control actor reading mode identifier $\textit{SI}^l$ from its control input port (see the black dot in \FIG~\ref{fig:sacsdf_ex})
and writing parameter values of $\vec{p}_i$ to the control output port connected to each dataflow actor $A_i^l$ according to function $\textit{MC}_i$ given in \DEF~\ref{def:a_c}.
Then, $A_i^l$ reads new parameter values $\vec{p}_i$ from its control input port and
sets the sequence of computations according to mapping relation $M_i$ in  \DEF~\ref{def:d_a}.
The production and consumption sequences are obtained in accordance with $M_\textit{IP}$ and $M_\textit{OP}$ in \DEF~\ref{def:sacsdf_ip} and \DEF~\ref{def:sacsdf_op}, respectively.
We further define/require that
mode transitions are only allowed at quiescent points~\cite{Neuendorffer2004}.
\begin{figure}[!t!b]
	\centering
	\includegraphics[width=1\columnwidth]{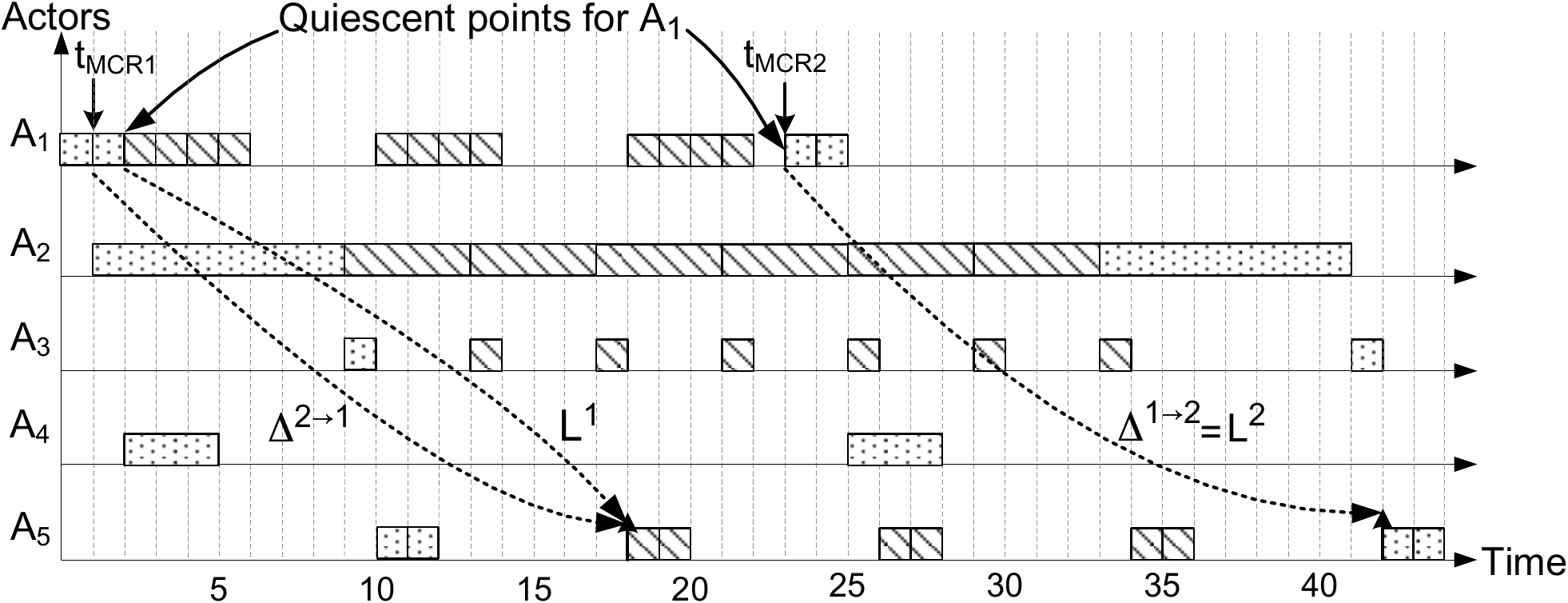}
	\caption{An execution of $G_1$ in \FIG~\ref{fig:sacsdf_ex} with two mode transitions under the ST transition protocol.
		\textit{MCR1} at time $t_\textit{MCR1}$ denotes a transition request from mode $\textit{SI}^2$ to $\textit{SI}^1$,
		and \textit{MCR2} at time $t_\textit{MCR2}$ denotes a transition request from mode $\textit{SI}^1$ to $\textit{SI}^2$.}
	\label{fig:st_transition}
\end{figure}

\begin{definition}[Quiescent Point of \SADF]\label{def:sacsdf_qp}
	For mode $\textit{SI}^l$, a quiescent point of \SADF{} actor $A_i$ is firing $\langle A_i^l, x \rangle$ in mode iteration $\textit{It}^l$ that satisfies
	\begin{equation}\label{eq:sacsdf_qp}
	\neg\exists \langle A_i^l, y \rangle \in \textit{It}^l ~:~y < x.
	\end{equation}
\end{definition}
\DEF~\ref{def:sacsdf_qp} simply refers to the first firing of actor $A_i$ in each iteration $\textit{It}^l$ of mode $\textit{SI}^l$. Recall that each iteration of mode $\textit{SI}^l$ consists of $q^l_i$ firings of actor $A_i$. Therefore, our requirement that a mode transition is only allowed at a quiescent point implies that a transition from mode $\textit{SI}^l$ to $\textit{SI}^o$ of actor $A_i$ happens when all firings of actor $A_i$ are completed in the iteration of $\textit{SI}^l$ when MCR occurs.
\FIG~\ref{fig:st_transition} shows an execution of $G_1$ in \FIG~\ref{fig:sacsdf_ex} with two mode transitions.
For instance, the MCR at time $t_\text{MCR1} = 1$ denotes a transition request from mode $\textit{SI}^2$ to $\textit{SI}^1$.
The mode transition of actor $A_1$ happens when all firings of actor $A_1$ are completed, that is at time 2 in \FIG~\ref{fig:st_transition} in this particular example.

\DEF~\ref{def:sacsdf_qp} defines mode transitions of \SADF{} graphs as partially ordered actor firings.
However, it does not specify at which time instance a mode transition actually starts.
Therefore, below, we focus on the transition protocol that defines the points in time for occurrences of mode transitions.
To quantify the transition protocol, we introduce a metric, called \textit{transition delay}, to measure the responsiveness of a protocol to a MCR.
\begin{definition}[Transition Delay\index{Transition delay}]\label{def:trans_delay}
	For a MCR at time $t_\text{MCR}$ calling for a
	mode transition from mode $\textit{SI}^o$ to $\textit{SI}^l$,
	the transition delay $\Delta^{o \rightarrow l}$ of a \SADF{} graph is defined as
	\begin{equation}\label{eq:trans_delay}
	\Delta^{o \rightarrow l}= \sigma_\text{snk}^{o \rightarrow l} - t_\text{MCR},
	\end{equation}
	where $\sigma_\text{snk}^{o \rightarrow l}$ is the earliest starting time of the sink actor in the new mode $\textit{SI}^l$.
\end{definition}

In \FIG~\ref{fig:st_transition}, we can compute the transition delay for \textit{MCR1} occurred at time $t_\text{MCR1} = 1$ as $\Delta^{2\rightarrow 1} = 18 - 1 = 17$.

\subsubsection{Self-timed Transition Protocol}\label{sec:st_trans}
In the existing adaptive MoCs like FSM-SADF~\cite{Geilen2010_codes10}, a protocol, referred here as \textit{Self-Timed}~(ST) transition protocol, is adopted.
The ST protocol specifies that actors are scheduled in the self-timed manner not only in the steady-state, but also during a mode transition.
For FSM-SADF upon a MCR, a firing of a FSM-SADF actor in the new mode can start immediately after the firing of the actor completes the old mode iteration.
The only possible delay is introduced due to availability of input data.
One reason behind the ST protocol is that the ST schedule for a (C)SDF graph (steady-state of \FSADF\footnote{The steady-state of SADF is defined similarly to that of \SADF. The only difference is that a scenario of \FSADF{} is a SDF graph, whereas a mode of \SADF{} is a CSDF graph.}) leads to
its highest achievable throughput.
However, the ST protocol generally introduces interference of one mode execution with another one.
The time needed to complete mode transitions also fluctuates
as the transition delay of an ongoing transition depends on the transitions that occurred in the past.
We consider this as an undesired effect because mode transitions using the ST protocol become potentially slow and unpredictable.
Another consequence of the incurred interference between modes using the ST transition protocol is the high time complexity of analyzing transition delays, because transition delays cannot be analyzed independently for each mode transition.
The analysis proposed in \cite{Geilen2010_codes10} uses an approach based on state-space exploration, which has the exponential time complexity.

Consider $G_1$ in \FIG~\ref{fig:sacsdf_ex}
and an execution of $G_1$ with the two mode transitions illustrated in \FIG~\ref{fig:st_transition}.
The execution is assumed under the ST schedule for both steady-state and mode transitions of $G_1$.
After \textit{MCR1} at time $t_\text{MCR1}$, the transition from mode $\textit{SI}^2$ to $\textit{SI}^1$ introduces interference to execution of the new mode $\textit{SI}^1$ from execution of the old mode $\textit{SI}^2$.
The interference increases the iteration latency of the new mode $\textit{SI}^1$ to $L^1 = S_5^1 - S_1^1 =  18-2 = 16$ from initially 10 as shown in \FIG~\ref{fig:static_ex1}
when $G_1$ is only executed in the steady-state of mode $\textit{SI}^1$.
Even worse, the interference is further propagated to the second mode transition after \textit{MCR2} at time $t_\text{MCR2}$.
In this case, the iteration latency $L^2 = S_5^2 - S_1^2 = 42 - 23 = 19$ is increased from initially 10 as shown in \FIG~\ref{fig:static_ex2}
when $G_1$ is only executed in the steady-state of mode $\textit{SI}^2$.
This example thus clearly shows the problem of the ST protocol.
That is, it introduces interference between the old and new modes due to mode transitions,
thereby increasing the iteration latency of the new mode in the steady-state after the transition.
Furthermore, the increase of iteration latency also potentially increases transition delays as it will be shown in the next section.

\subsubsection{Maximum-Overlap Offset Transition Protocol}\label{sec:moo_trans}
To address the problem of the ST transition protocol explained above, we propose a new transition protocol, called \textit{Maximum-Overlap Offset}~(MOO).
\begin{definition}[Maximum-Overlap Offset~(MOO)\index{Maximum-Overlap Offset~(MOO)}]\label{def:moo}
	For a \SADF{} graph and a transition from mode $\textit{SI}^o$ to $\textit{SI}^l$, Maximum-Overlap Offset~(MOO), denoted by $x$, is defined as
	\begin{equation}\label{eq:moo_x}
	x = \begin{cases}
	\max_{A_i \in \mathcal{A}^o \cap \mathcal{A}^l}(S_i^o - S_i^l)       & \mbox{if  } \max_{A_i \in \mathcal{A}^o \cap \mathcal{A}^l}(S_i^o - S_i^l) > 0 \\
	0                                                   & \mbox{otherwise},
	\end{cases}
	\end{equation}
	where $\mathcal{A}^o \cap \mathcal{A}^l$ is set of actors active in both modes $\textit{SI}^o$ and $\textit{SI}^l$.
\end{definition}

\begin{figure}[!t!b]
	\centering
	\includegraphics[width=.7\columnwidth]{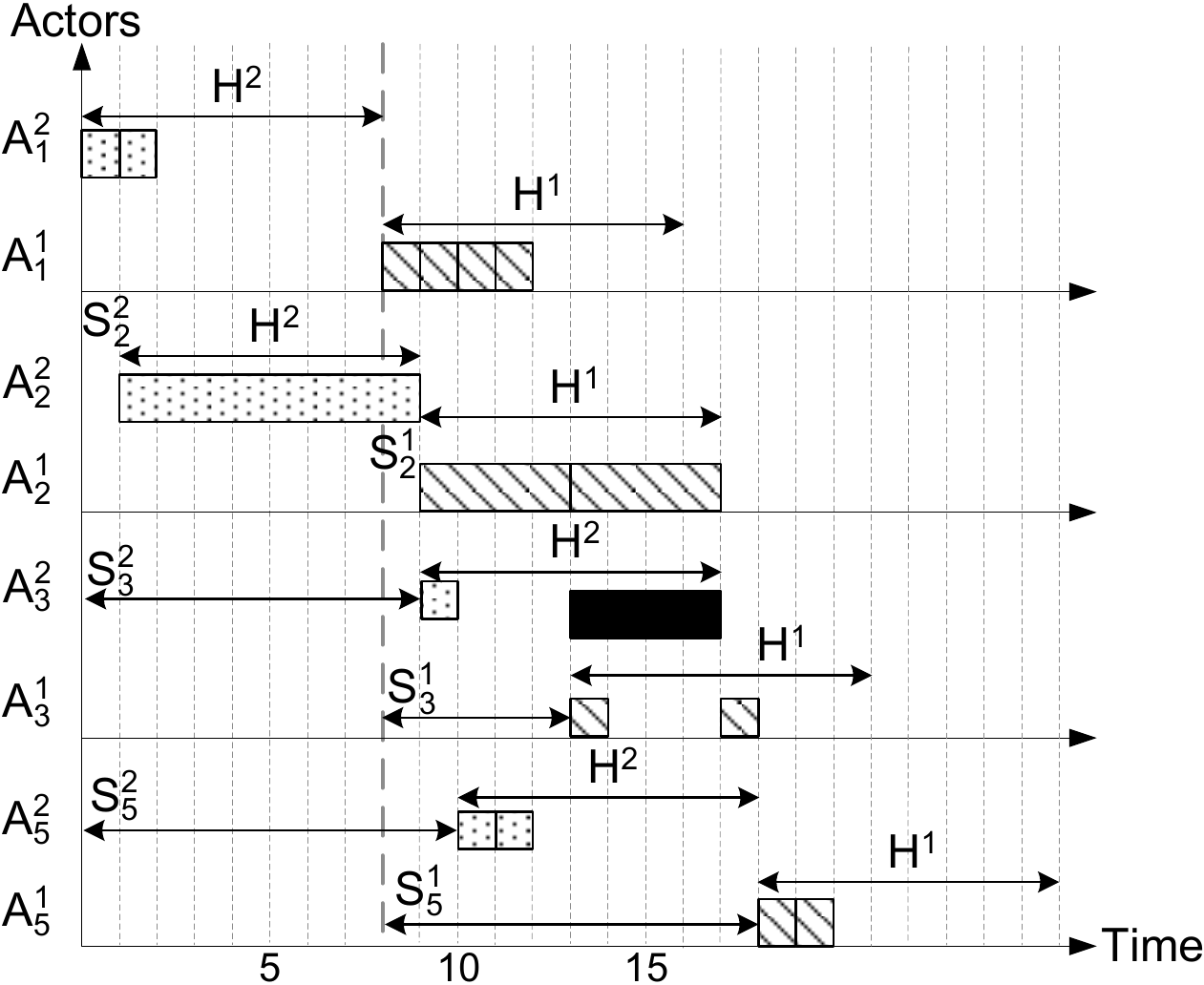}
	\caption{An illustration of the Maximum-Overlap Offset~(MOO) calculation.}
	\label{fig:moo}
\end{figure}
Basically, we first assume that the new mode $\textit{SI}^l$ starts immediately after the source actor $A_\text{src}^o$ of the old mode $\textit{SI}^o$ completes its last iteration $\textit{It}^o$.
All actors $A_i^l$ of the new mode execute according to the earliest starting times $S_i^l$ and iteration period $H^l$ in the steady-state.
Under this assumption, if the execution of the new mode overlaps with the execution of the old mode in terms of iteration periods $H^o$ and $H^l$,
we then need to offset the starting time of the new mode by the maximum overlap among all actors.
In this way, the execution of the new mode will have the same iteration latency as that of the new mode in the steady-state, i.e., no interference between the execution of both old and new modes.

Consider \textit{MCR1} at time $t_\text{MCR1}$ shown in \FIG~\ref{fig:st_transition}.
Obtaining MOO $x$ is illustrated in  \FIG~\ref{fig:moo}.
We first assume that the new mode $\textit{SI}^1$ starts at the time when the source actor $A_1^2$ completes the last iteration at time 8 (see bold, dashed line in \FIG~\ref{fig:moo}).
Actors $A_i^1$ in the new mode start as if they executed in the steady-state of mode $\textit{SI}^1$.
Then, we can see that, for actor $A_3$, the execution of $A_3^1$ in the new mode $\textit{SI}^1$ according to $S^1_3$ in \FIG~\ref{fig:static_ex1} overlaps 4 time units (solid bar in \FIG~\ref{fig:moo}) with the execution of $A_3^2$ in the old mode $\textit{SI}^2$ in terms of iteration periods $H^2$ and $H^1$.
This is also the maximum overlap between the execution of actors in modes $\textit{SI}^2$ and $\textit{SI}^1$.
According to \DEF~\ref{def:moo}, $x$ can be obtained through the following equations: \begin{align*}
S_1^2 - S_1^1 =  0 - 0 = 0,~~
S_2^2 - S_2^1 =  1 - 1 = 0,~~\\
S_3^2 - S_3^1 =  9 - 5 = 4,~~
S_5^2 - S_5^1 =  10 -10 = 0.
\end{align*}
Therefore, it results in an offset $x = \max(0,0,4,0)=4$ to the start of mode $\textit{SI}^1$ and is shown in \FIG~\ref{fig:moo_transition}.
\begin{figure}[!t!b]
	\centering
	\includegraphics[width=1\columnwidth]{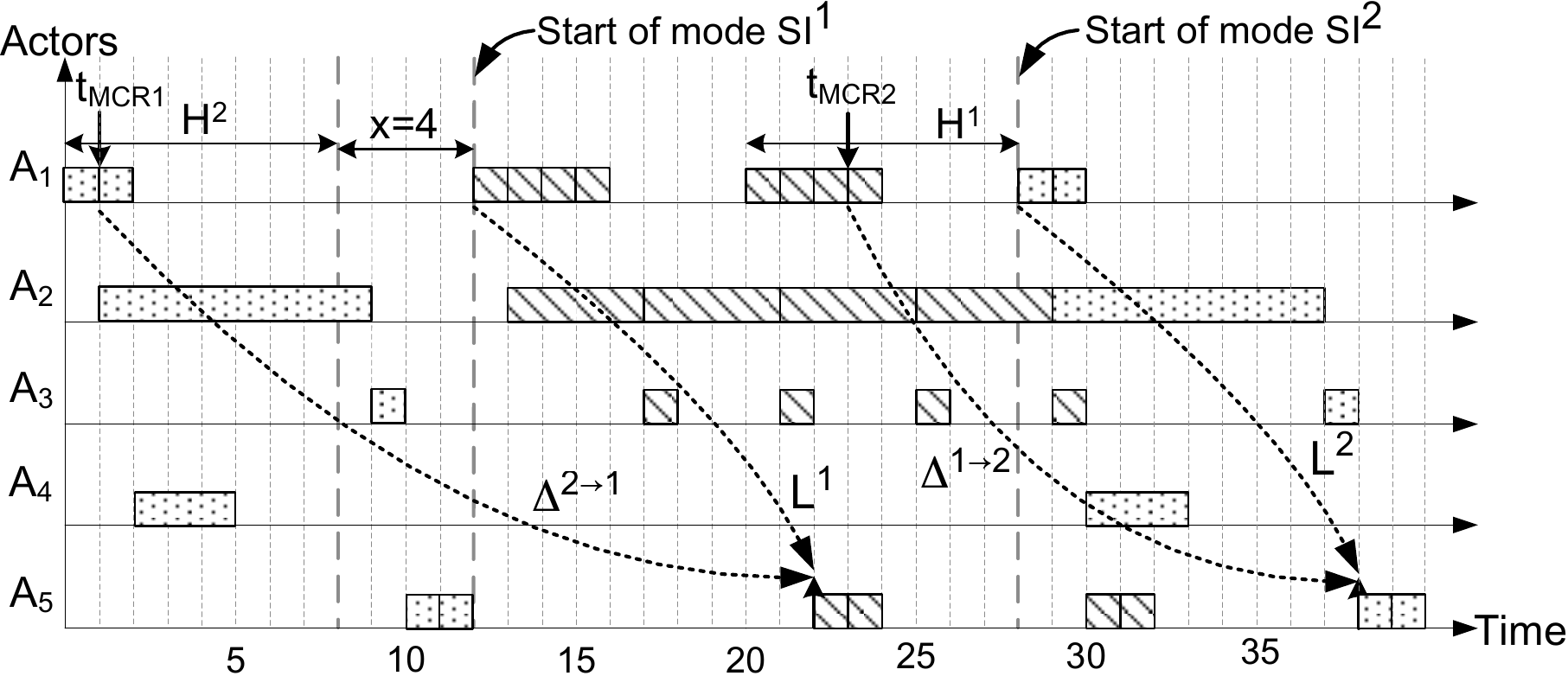}
	\caption{The execution of $G_1$ with two mode transitions under Maximum-Overlap Offset~(MOO) protocol.}
	\label{fig:moo_transition}
\end{figure}
The starting time of the new mode $\textit{SI}^1$, namely the source actor $A_1^1$, must be first delayed to the time when $A_2^1$ completes the iteration period $H^2$ in the last iteration, namely time $8$ shown as the first bold dashed line in \FIG~\ref{fig:moo_transition}.
In addition, the MOO $x = 4$ must be further added to the starting time of $A_1^1$~(the second bold dashed line in \FIG~\ref{fig:moo_transition}).
\FIG~\ref{fig:moo_transition} also shows another transition from mode $\textit{SI}^1$ to $\textit{SI}^2$ with a {MCR} occurred at time $t_\text{MCR2} = 23$.
The starting time of the source actor $A_1^2$ in the new mode $\textit{SI}^2$ must be first delayed to the time $28$~(the third bold dashed line in  \FIG~\ref{fig:moo_transition}), namely the time when $A^1_1$ completes the last iteration in the old mode $\textit{SI}^1$.
To calculate the MOO $x$ for this transition, the following equations hold: \begin{align*}
S_1^1 - S_1^2 =  0 - 0 = 0,~~
S_2^1 - S_2^2 =  1 - 1 = 0,~~\\
S_3^1 - S_3^2 =  5 - 9 = -4,~~
S_5^1 - S_5^2 =  10 -10 = 0.
\end{align*}
Thus, the equations above result in $x = \max(0,0,-4,0)=0$.
For this transition, the new mode $\textit{SI}^2$ starts at time $28$ as shown in  \FIG~\ref{fig:moo_transition}.

The MOO protocol offers several advantages over the ST protocol. 
Essentially, the MOO protocol retains the iteration latency of the \SADF{} graph in the new mode the same as the initial value, thereby avoiding the interference between the old and new modes.
For instance, after \textit{MCR1} and \textit{MCR2} in  \FIG~\ref{fig:moo_transition}, mode $\textit{SI}^1$ and $\textit{SI}^2$ still have the initial iteration latency $L^1= 10 $ and $ L^2=10$ as shown in \FIG~\ref{fig:static_ex}.
Therefore, efficiently computing the starting time of \SADF{} actors in the new mode becomes feasible and it plays an important role in deriving a hard-real time schedule for the \SADF{} actors.
As a result, analysis of the worst-case transition delay is much simpler~(see Theorem \ref{theo:min_max_delay}) than that of the ST protocol, because the transition
delay does not depend on the order of the transitions that occurred previously.

Concerning the transition delay, it may be the case that the MOO protocol results in initially longer transition delay than the ST protocol does due to the offset given in \DEF~\ref{def:moo}.
For \textit{MCR1} occurred at time $t_\text{MCR1}$, the transition delay of the MOO protocol is $\Delta^{2 \rightarrow 1} = 22 - 1 = 21$ as shown in \FIG~\ref{fig:moo_transition},
whereas the transition delay of the ST protocol is equal to $\Delta^{2 \rightarrow 1} = 18 - 1 = 17$ as shown in \FIG~\ref{fig:st_transition}.
On the other hand,
let us consider the same transition request \textit{MCR2} occurred at time $t_\text{MCR2} = 23$ shown in \FIG~\ref{fig:st_transition} and \FIG~\ref{fig:moo_transition}.
For \textit{MCR2}, the ST protocol results in transition delay $\Delta^{1\rightarrow 2} = 42 - 23 = 19$ as shown in \FIG~\ref{fig:st_transition}.
In contrast, the transition delay for the MOO protocol is $\Delta^{1 \rightarrow 2} = 38 - 23 = 16$ as shown in \FIG~\ref{fig:moo_transition}.
The MOO protocol could provide shorter transition delay than the ST protocol, thereby faster responsiveness to a mode transition. \section{Hard Real-Time Analysis and Scheduling of \SADF}\label{sec:hrt_sacsdf}
Based on the proposed MOO protocol for mode transitions, in this section, we propose a hard real-time analysis and scheduling framework for \SADF{}.
More specifically, we propose an analysis technique for mode transitions in \SADF{} to reason about transition delays, such that timing constraints can be guaranteed.
The hard real-time scheduling framework for \SADF{} graphs is an extension of the SPS~\cite{Bamakhrama2011_emsoft} framework initially developed for CSDF graphs.

As explained in \Cref{SPS}, the key concept of the SPS framework is to derive a periodic taskset representation for a CSDF graph.
Since the steady-state of a mode can be considered as a CSDF graph according to Definitions~\ref{def:sacsdf_sn} and \ref{def:sacsdf_steady}, it is thus straightforward to represent the steady-state of a \SADF{} graph as a periodic taskset and schedule the resulting taskset using any well-known hard real-time scheduling algorithm.
Using the SPS framework, we can derive the two main parameters for each \SADF{} actor in mode $\textit{SI}^k$, namely the period ($T_i^k$ in \Cref{period}) and the earliest starting time ($S_i^k$ in \Cref{start_time}).
Under SPS, the iteration period in mode $\textit{SI}^k$ is obtained as $H^k = q_i^k T_i^k,\ \exists A_i^k \in \mathcal{A}$.
Below, we focus on determining the earliest starting time of each actor in the new mode upon a transition.
From the earliest starting time, we can reason about the transition delay to quantify the responsiveness of a transition.

Upon a MCR, a \SADF{} graph can safely switch to the new mode if all of its actors have completed their last iteration in the old mode upon synchronous protocol.
In this case, the firings of \SADF{} actors in the new mode do not overlap with the firings of actors in the old mode. This is called synchronous protocol~\cite{Real2004} in real-time systems with mode change.
One of its advantages is the simplicity, i.e., the synchronous protocol does not require any schedulability test at both compile-time and run-time.
However, other protocols lead to earlier starting times than the synchronous protocol. Therefore, the synchronous protocol sets an upper bound on the earliest starting time for each \SADF{} actor in the new mode.
\begin{lemma}
	For a \SADF{} graph $G$ under SPS and a MCR from mode $\textit{SI}^o$ to $\textit{SI}^l$ at time $t_\text{MCR}$,
	the earliest starting time of actor $A_i^l$, $\hat{\sigma}_i^{o \rightarrow l}$, is upper bounded by
	\begin{equation}\label{eq:start_upper}
	\hat{\sigma}_i^{o \rightarrow l} = {F}_\text{src}^o + S_\text{snk}^o + S_i^l,
	\end{equation}
	where ${F}_\text{src}^o$ indicates the time when the source actor $A_\text{src}^o$ completes its last iteration $\textit{It}^o$ of the old mode $\textit{SI}^o$ and is given by
	\begin{equation}\label{eq:complete_src}
	{F}_\text{src}^o =  t_S^o + \bigg\lceil \frac{t_\textit{MCR} - t_S^o}{H^o} \bigg\rceil H^o.
	\end{equation}
	$t_S^o$ is the starting time of mode $\textit{SI}^o$ and $H^o$ is the iteration period of mode $\textit{SI}^o$.
\end{lemma}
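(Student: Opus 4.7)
The plan is to argue that under the synchronous protocol no firing of the new mode may overlap with an unfinished firing of the old mode, and to pinpoint the three time offsets that stack up to yield \Cref{eq:start_upper}. Because the synchronous protocol is the most conservative protocol consistent with \DEF~\ref{def:sacsdf_qp}, exhibiting the starting time it produces will immediately give the desired upper bound on the earliest $\hat{\sigma}_i^{o\rightarrow l}$ achievable by any protocol respecting quiescent-point transitions.

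First I would locate $F_\text{src}^o$. Since \DEF~\ref{def:sacsdf_qp} only permits a mode switch at a quiescent point, the source actor $A_\text{src}^o$ must complete the graph iteration in progress when MCR is issued. Under SPS the source fires periodically with period $T_\text{src}^o = H^o / q_\text{src}^o$ starting at $t_S^o$, so its $N$-th graph iteration ends precisely at $t_S^o + N H^o$. Choosing $N = \lceil (t_\textit{MCR} - t_S^o) / H^o \rceil$ picks the smallest iteration index that is not completed strictly before $t_\textit{MCR}$, which yields the expression in \Cref{eq:complete_src}.

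Next I would show that the latest completion time among all active actors of $\textit{SI}^o$ equals $F_\text{src}^o + S_\text{snk}^o$. Under SPS, every active actor $A_j^o$ starts its $N$-th iteration at $t_S^o + S_j^o + (N-1) H^o$ and finishes all $q_j^o$ firings of that iteration by $t_S^o + S_j^o + N H^o = F_\text{src}^o + S_j^o$. The recursion \Cref{start_time,continue_start_time} for the SPS starting times enforces $S_j^o \le S_i^o$ whenever $A_j^o$ is a predecessor of $A_i^o$, and since the sink receives tokens directly or transitively from every active actor, $S_\text{snk}^o = \max_{A_j^o \in \mathcal{A}^o} S_j^o$. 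Consequently, every old-mode firing is finished by $F_\text{src}^o + S_\text{snk}^o$.

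Finally I would stack the new-mode SPS offsets on top of this quiescent moment. Under the synchronous protocol the source actor of the new mode, $A_\text{src}^l$, cannot begin before the old mode is entirely done, so $A_\text{src}^l$ starts at $F_\text{src}^o + S_\text{snk}^o$, and every new-mode actor $A_i^l$ inherits the offset $S_i^l$ from $A_\text{src}^l$, giving $\hat{\sigma}_i^{o\rightarrow l} = F_\text{src}^o + S_\text{snk}^o + S_i^l$ as stated. The step I expect to require the most care is the claim $S_\text{snk}^o = \max_{A_j^o \in \mathcal{A}^o} S_j^o$: it is immediate when the mode graph has a unique sink reachable from every active actor, but otherwise one must argue separately over each leaf actor and take the latest completion time among them to recover the bound.
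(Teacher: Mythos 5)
Your proof is correct and follows essentially the same route as the paper: it identifies the completion time of the old mode's last iteration as $F_\text{src}^o + S_\text{snk}^o$, starts the new mode's source there under the synchronous (non-overlapping) requirement, and adds the SPS offset $S_i^l$ for each remaining actor. The assumption you flag as delicate, namely $S_\text{snk}^o = \max_{A_j^o} S_j^o$, is exactly the assumption the paper states explicitly ($\forall A_i^o \in \mathcal{A},\ S_i^o \le S_\text{snk}^o$), so the two arguments coincide.
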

\begin{proof}
	As explained previously for a transition from mode $\textit{SI}^o$ to $\textit{SI}^l$, the upper bound of the earliest starting time for each actor $A_i^l$ is computed in such a way that no firings of actors $A_i^o$ and $A_i^l$ occur simultaneously.
	This means, the start of an actor $A_i^l$ must be later than all actors $A_i^o$ have completed the last iteration $\textit{It}^o$
	of the old mode $\textit{SI}^o$.
	Given that mode $\textit{SI}^o$ starts at time $t_S^o$,
	the completion time of all actors $A_i^o$ in the last iteration $\textit{It}^o$ can be thus computed as
	\begin{equation}\label{eq:completion_snk}
		F_\text{snk}^o = t_S^o + \bigg\lfloor \frac{t_\textit{MCR} - t_S^o}{H^o} \bigg\rfloor H^o + S_\text{snk}^o + H^o.
	\end{equation}
	where $F_\text{snk}^o$ is the time when the old mode $\textit{SI}^o$ completes the last iteration $\textit{It}^o$.
	It is assumed that the sink actor $A_\text{snk}^o$ is the last actor to complete
	the iteration, i.e., $\forall A_i^o \in \mathcal{A}, S_i^o \le S_\text{snk}^o$.
	Given \Cref{eq:complete_src}, \Cref{eq:completion_snk} can be rewritten as
	\begin{equation*}
		F_\text{snk}^o = t_S^o + \bigg\lceil \frac{t_\textit{MCR} - t_S^o}{H^o} \bigg\rceil H^o + S^o_\text{snk} = {F}_\text{src}^o + S_\text{snk}^o.
	\end{equation*}
	Now, starting the source actor $A_\text{src}^l$ at any time later than $F_\text{snk}^o$ is valid without introducing simultaneous execution of actors $A_i^o$ and $A_i^l$.
	Therefore, the earliest starting time of source actor $A_\text{src}^l$ is $\hat{\sigma}_\text{src}^{o \rightarrow l} = F_\text{snk}^o$.
	For any actor $A_i^l \in \mathcal{A} \setminus A_\text{src}^l$, its earliest starting times must satisfy \Cref{start_time} imposed by the SPS framework.
	That is, the earliest starting time $\hat{\sigma}_i^{o \rightarrow l}$ of actor $A_i^l$ can be obtained by adding $S_i^l$ to $ \hat{\sigma}_\text{src}^{o \rightarrow l}$.
\end{proof}

\begin{figure}[!t!b]
	\centering
	\includegraphics[width=1\columnwidth]{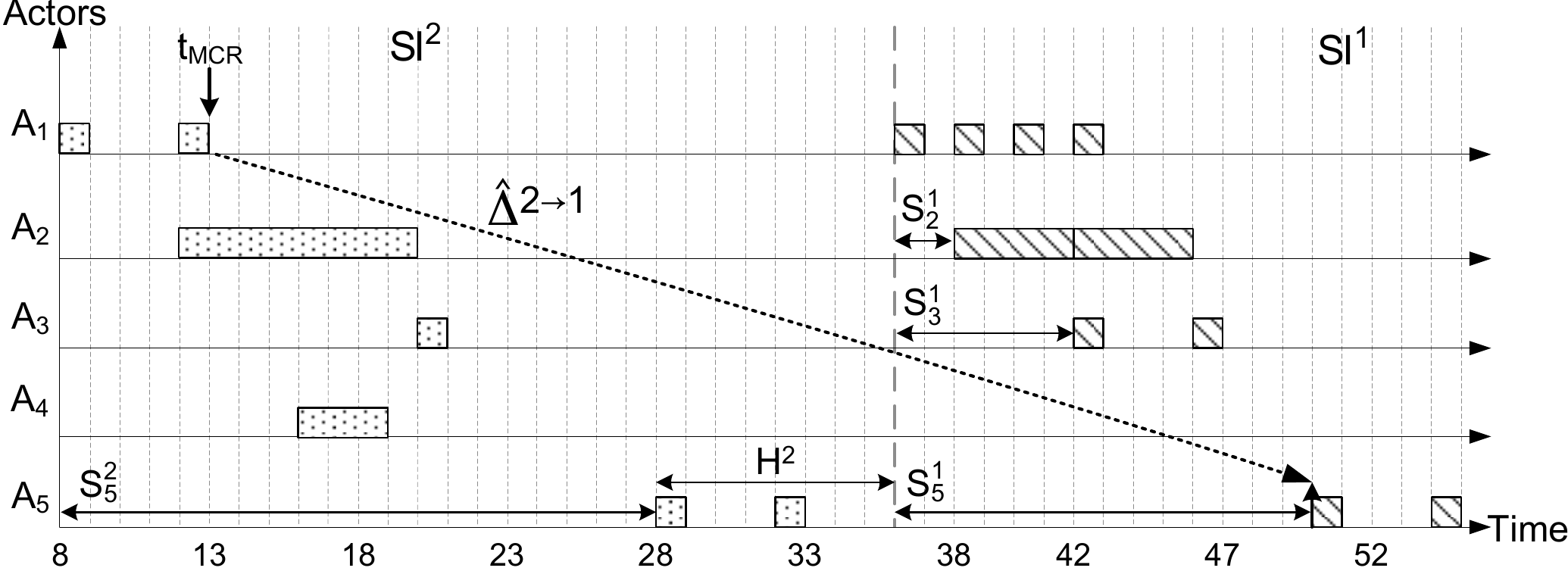}
	\caption{Upper bounds of earliest starting times for transition from mode $\textit{SI}^2$ to $\textit{SI}^1$.
	}
	\label{fig:start_upper}
\end{figure}
Let us consider the actor parameters given in \TAB~\ref{tab:sacsdf_param_ex2} for $G_1$ in \FIG~\ref{fig:sacsdf_ex}.
The third row shows the WCET for each actor in modes $\textit{SI}^1$ and $\textit{SI}^2$.
Based on WCETs, the period (fourth row in \TAB~\ref{tab:sacsdf_param_ex2}) and the earliest starting time~(fifth row in \TAB~\ref{tab:sacsdf_param_ex2}) for each actor in the steady-state of both modes are obtained according to \Cref{period} and \Cref{start_time}, respectively.
Given $\vec{q}^2$ in \Cref{eq:sacsdf_q2}, we can also compute iteration period $H^2 = q^2_1 T^2_1 = 2 \times 4 = 8$.
Now consider the mode transition from mode $\textit{SI}^2$ to $\textit{SI}^1$ shown in \FIG~\ref{fig:start_upper}.
Assume that the MCR occurs at time $t_\text{MCR} = 13$ and mode $\textit{SI}^2$ starts at time $t_S^2 = 8$.
The completion time of the last iteration $\textit{It}^2$
is equal to the completion time of the sink actor $A_5^2$ computed as
\begin{equation*}
{F}_\text{snk}^2 = t_S^2 + \bigg \lceil \frac{t_\text{MCR} - t_S^2}{H^2} \bigg\rceil H^2 + S_5^2 = 8+ \bigg \lceil \frac{13 - 8}{8} \bigg\rceil  8 + 20 = 36.
\end{equation*}
In \FIG~\ref{fig:start_upper}, ${F}_\text{snk}^{2}$ corresponds to the earliest starting time of the source actor $A_1^1$ (bold dashed line).
Finally, we can compute the earliest starting time for each actor in the new mode $\textit{SI}^1$ by adding $S_i^1$.
Considering for instance the sink actor $A_5^1$ in the new mode with $S_5^1 = 14$, the upper bound of its earliest starting time can be obtained as
\begin{equation*}
\hat{\sigma}_5^{2 \rightarrow 1} = F^2_\text{src} + S_5^2 + S^1_5 = F^2_\text{snk} + S^1_5 = 36 + 14 = 50.
\end{equation*}
We can thus compute the transition delay (\textit{cf}. \DEF~\ref{def:trans_delay}) as
\begin{equation*}
\hat{\Delta}^{2\rightarrow 1} = \hat{\sigma}^{2\rightarrow 1}_5 - t_\text{MCR} = 50 - 13 = 37.
\end{equation*}

\begin{figure}[!t!b]
	\centering
	\includegraphics[width=1\columnwidth]{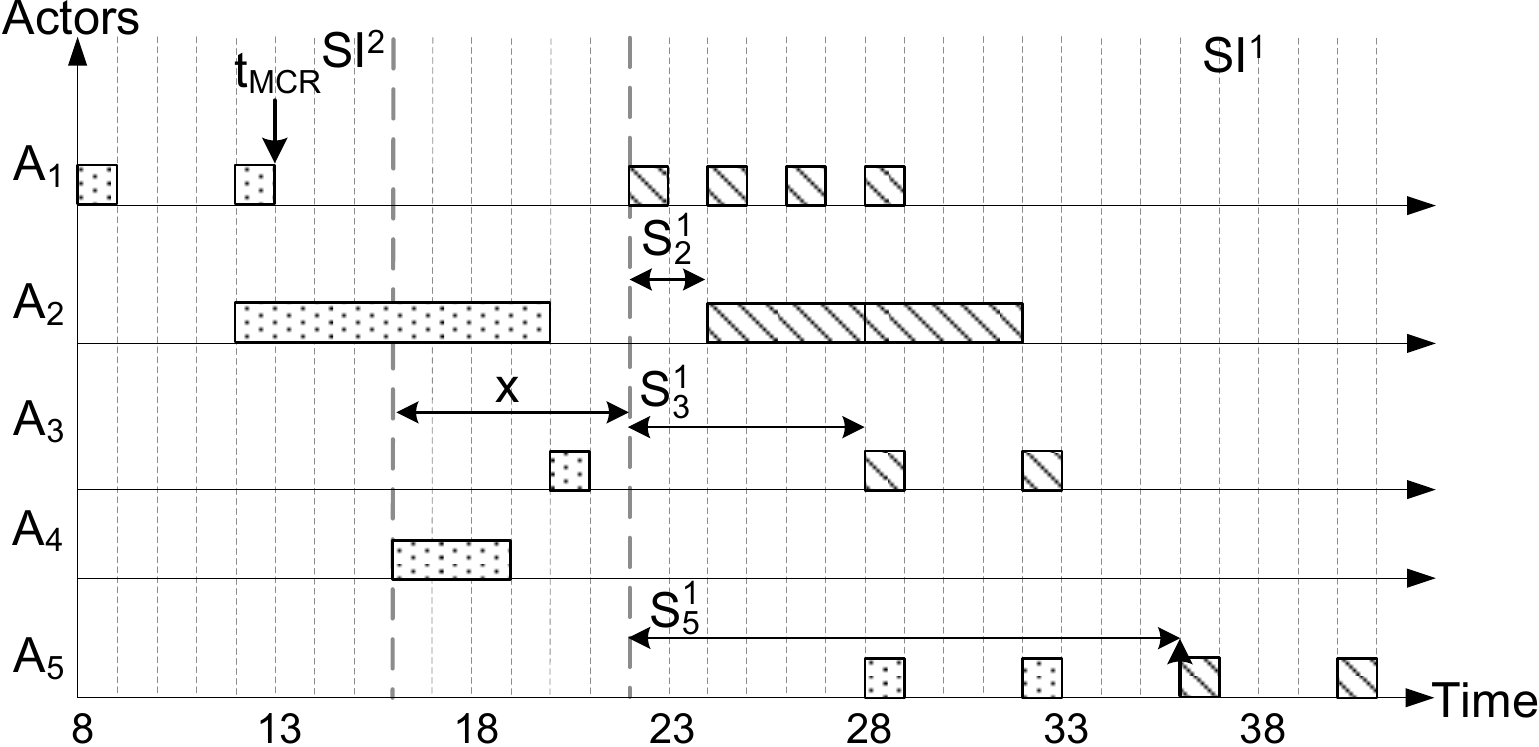}
	\caption{Earliest starting times for transition from mode $\textit{SI}^2$ to $\textit{SI}^1$ with the MOO protocol.
	}
	\label{fig:start_moo}
\end{figure}
Although the upper bound of the earliest starting times is easy to obtain for \SADF{} actors in the new mode, it does not provide a responsive mode transition.
Therefore, here we aim at deriving a lower bound of the earliest starting times with the proposed MOO protocol.
\begin{lemma}\label{lm:start_lower}
	For a \SADF{} graph under SPS and a MCR from mode $\textit{SI}^o$ to $\textit{SI}^l$ at time $t_\text{MCR}$,
	the earliest starting time of actor $A_i^l$ using the MOO protocol is lower bounded by $\check{\sigma}_i^{o \rightarrow l}$ given as
	\begin{equation}\label{eq:start_lower}
		\check{\sigma}_i^{o \rightarrow l} =  {F}_\text{src}^o + x + S_i^l,
	\end{equation}
	where ${F}^o_\text{src}$ is given in \Cref{eq:complete_src}
	and $x$ is given in \Cref{eq:moo_x}.
\end{lemma}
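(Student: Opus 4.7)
The plan is to derive the lower bound by establishing two independent constraints on the earliest start time $t$ of the new mode's source actor $A_\text{src}^l$, and then propagate that bound to every actor $A_i^l$ via the SPS offsets inside mode $\textit{SI}^l$. The first constraint comes from the quiescent-point requirement of \DEF~\ref{def:sacsdf_qp}, combined with the argument used in the previous lemma: $A_\text{src}^l$ cannot start before the source actor of the old mode has completed its last iteration, giving $t \geq F_\text{src}^o$ with $F_\text{src}^o$ as in \Cref{eq:complete_src}. The second constraint encodes the defining property of the MOO protocol in \DEF~\ref{def:moo}, namely that for every actor $A_i$ active in both modes, the iteration periods of $A_i^o$ and $A_i^l$ must not overlap in time, so that the new mode preserves its steady-state iteration latency.

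To make the non-interference constraint concrete, I would show that the closing instant of the last iteration of an active shared actor $A_i^o$ is $F_\text{src}^o + S_i^o$. Under SPS, the $k$-th iteration of $A_i^o$ starts at $t_S^o + S_i^o + (k-1)H^o$ and closes at $t_S^o + S_i^o + kH^o$, and the last iteration is precisely the one whose closing coincides with $F_\text{src}^o + S_i^o$ (using $S_\text{src}^o = 0$ and \Cref{eq:complete_src}). If $A_\text{src}^l$ starts at time $t$, then by the SPS relation in the new mode $A_i^l$ starts at $t + S_i^l$, so non-interference requires $t + S_i^l \geq F_\text{src}^o + S_i^o$, i.e., $t \geq F_\text{src}^o + (S_i^o - S_i^l)$. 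Taking the maximum over all $A_i \in \mathcal{A}^o \cap \mathcal{A}^l$ and combining with the quiescent-point bound $t \geq F_\text{src}^o$ yields exactly $t \geq F_\text{src}^o + x$, where $x$ is the MOO defined in \Cref{eq:moo_x} (the outer $\max$ with $0$ in the MOO definition is precisely what absorbs the quiescent-point bound in the case where all $S_i^o - S_i^l \leq 0$).

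Hence the earliest feasible start of $A_\text{src}^l$ under MOO is $F_\text{src}^o + x$. For any $A_i^l$, the SPS offset $S_i^l$ from the source in the new mode (\Cref{start_time}) then gives $\check{\sigma}_i^{o \rightarrow l} = F_\text{src}^o + x + S_i^l$, which matches \Cref{eq:start_lower}. The main obstacle I anticipate is rigorously justifying the alignment between actor-iteration closings and the source-iteration closing $F_\text{src}^o$ under SPS, so that the "last iteration" of each shared active actor $A_i^o$ really ends at $F_\text{src}^o + S_i^o$ and not at some earlier point dictated by its actual firings; once this alignment is pinned down, the argument reduces to the two linear inequalities above followed by a routine application of the SPS offset relation to propagate the bound from $A_\text{src}^l$ to each $A_i^l$.
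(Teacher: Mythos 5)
Your proof is correct and follows essentially the same route as the paper's: lower-bound the start of $A_\text{src}^l$ by $F_\text{src}^o$, add the MOO offset $x$, and propagate to every $A_i^l$ via the SPS offset $S_i^l$. In fact you are slightly more explicit than the paper, which simply asserts that ``the offset $x$ \dots must be taken into account,'' whereas you derive the form $x=\max(S_i^o-S_i^l)$ from the non-overlap inequality $t+S_i^l\ge F_\text{src}^o+S_i^o$ over the shared active actors; this is a faithful unpacking of \DEF~\ref{def:moo}, not a different argument.
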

\begin{proof}
	Under the MOO protocol, the start of actor $A_i^l$ must be later than the time when $A_i^o$, if any, completes
	its last iteration in the old mode $\textit{SI}^o$.
	We assume that the source actor $A_\text{src}^l$ is the first actor to start in the new mode $\textit{SI}^l$, i.e.,
	$\forall A_i^l \in \mathcal{A}, S_i^l \ge S_\text{src}^l$.
	Thus, the starting time of the source actor $A_\text{src}^l$ is at least equal to the completion time of the last iteration of $A_\text{src}^o$, denoted by $F_\text{src}^o$.
	Given $F_\text{src}^o$ in \Cref{eq:complete_src}, it thus holds $\check{\sigma}_\text{src}^{o\rightarrow l} \ge F_\text{src}^o$.
	Then, the offset $x$ because of the MOO protocol given in \Cref{eq:moo_x} must be taken into account.
	Consequently, the earliest starting time of $A_\text{src}^l$ is lower bounded by $\check{\sigma}_\text{src}^{o\rightarrow l} = F_\text{src}^o + x$.
	For any actor $A_i^l \in \mathcal{A} \setminus A_\text{src}^l$, its earliest starting times must satisfy \Cref{start_time} imposed by the SPS framework.
	Hence, the earliest starting time $\check{\sigma}_i^{o \rightarrow l}$ of actor $A_i^l$ can be obtained by adding $S_i^l$ to $ \check{\sigma}_\text{src}^{o \rightarrow l}$.
\end{proof}

Let us consider again the transition from mode $\textit{SI}^2$ to $\textit{SI}^1$.
With the MOO protocol, the mode transition is illustrated in \FIG~\ref{fig:start_moo}.
Upon the MCR at time $t_\text{MCR} = 13$ and $t_S^2 = 8$, source actor $A_1^2$ completes its last iteration $\textit{It}^2$ in the old mode $\textit{SI}^2$ at the time (\textit{cf}. \Cref{eq:complete_src}) given as
\begin{equation*}
	{F}_\text{src}^2 = {F}_1^2 = t_S^2 + \bigg \lceil \frac{t_\text{MCR} - t_S^2}{H^2} \bigg \rceil  H^2 = 8 + \bigg\lceil \frac{13 - 8}{8} \bigg\rceil  8 = 16.
\end{equation*}
This is the earliest possible time at which mode transition is allowed.
For MOO, $x$ can be computed according to \Cref{eq:moo_x}.
Therefore, the following equations hold:
\begin{align*}
S_1^2 - S_1^1 = 0 - 0 = 0,~
S_2^2 - S_2^1 = 4 - 2 = 2,~\\
S_3^2 - S_3^1 = 12 - 6 = 6,~
S_5^2 - S_5^1 = 20 - 14 = 6.
\end{align*}
It thus yields $x= \max(0,2,6,6) = 6$, i.e., an offset $x=6$ is added to $F_\text{src}^2$.
It can be seen in \FIG~\ref{fig:start_moo} that the source actor $A_1^1$ starts at time $F_\text{src}^2+x = 16 + 6 = 22$.
Finally, the earliest starting times of actors in mode $\textit{SI}^1$ can be determined by adding $S_i^1$.
Considering for instance $A_5^1$ in the new mode, the lower bound of its earliest starting time can be obtained as:
\begin{equation*}
	\check{\sigma}_5^{2 \rightarrow 1} = {F}_\text{src}^2 + x + S_5^1 = 16 + 6 + 14 = 36.
\end{equation*}
Now, the transition delay (\textit{cf}. \DEF~\ref{def:trans_delay}) can be obtained as
\begin{equation*}
	\check{\Delta}^{2\rightarrow 1} = \check{\sigma}_5^{2\rightarrow 1} - t_\text{MCR} = 36 - 13 = 23.
\end{equation*}

\subsection{Scheduling Analysis under a Fixed Allocation of Actors}
During a mode transition of a \SADF{} graph according to the MOO protocol, actors execute simultaneously in the old and new modes.
The derived starting time in Lemma \ref{lm:start_lower} for each actor is only the lower bound because the allocation of actors on PEs is not taken into account yet.
That means, the derived starting times according to Lemma \ref{lm:start_lower} can be only achieved during mode transitions when each actor is allocated to a separate PE.
In a practical system where multiple actors are allocated to the same PE,
the PE may be potentially overloaded during mode transitions.
To avoid overloading of PEs, the earliest starting times of actors may be further delayed.
\begin{lemma}\label{lm:overload_offset}
	For a \SADF{} graph under SPS, a MCR from mode $\textit{SI}^o$ to $\textit{SI}^l$, and a $m$-partition of all actors $\Psi = \{\Psi_1,\dots,\Psi_m\}$, where $m$ is the number of PEs,
	the earliest starting time of an actor $A_i^{l}$ without overloading the underlying PE is given by
	\begin{equation}\label{eq:sacsdf_start}
		\sigma_i^{o\rightarrow l} = {F}_\text{src}^o + \delta^{o \rightarrow l} + S_i^l,
	\end{equation}
	where ${F}_\text{src}^o$ is computed by \Cref{eq:complete_src} and $\delta^{o \rightarrow l}$ is obtained as
	\begin{equation}\label{eq:overload_offset}
		\delta^{o\rightarrow l} = \min_{t \in [x, S^o_\text{snk}]} \{t: U_j(k) \leq \textit{UB},
		~{\forall k \in [t, S_\text{snk}^o]} \land \forall \Psi_j \in \Psi\}.
	\end{equation}
	\textit{UB} denotes the utilization bound of the scheduling algorithm used to schedule actors on each PE.
	$\Psi_j$ contains the set of actors allocated to PE$_j$.
	$U_j(k)$ is the total utilization of PE$_j$ at time $k$ demanded by both mode $\textit{SI}^o$ and $\textit{SI}^l$ actors, and is given by
	\begin{equation}\label{eq:total_u}
		U_j(k) = \underbrace{\sum_{A_{d}^o \in \Psi_j} \Big( u_d^o -h( k - S_d^o ) \cdot u_d^o \Big)}_{U_j^o(k)} + \underbrace{\sum_{A_{d}^l \in \Psi_j} \Big( h( k - S_d^l -t ) \cdot u_d^l \Big)}_{U_j^l(k)},
	\end{equation}
	$A_{d}^o \in \Psi_j$ is an actor active in the old mode $\textit{SI}^o$ and allocated to PE$_j$.
	$A_{d}^l \in \Psi_j$ is an actor active in the new mode $\textit{SI}^l$ and allocated to PE$_j$.
	$h(t)$ is the Heaviside step function.
\end{lemma}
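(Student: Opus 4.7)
The plan is to build on \Cref{lm:start_lower}, which gives the earliest starting time under MOO assuming one actor per PE, and then to account for the constraint that when several actors share a processor the per-PE utilization during the transition must not exceed the schedulability bound of the underlying scheduler. First, I would note that the offset $x$ from \Cref{eq:moo_x} guarantees iteration-latency preservation but is agnostic to processor load; hence starting the new mode exactly at $F_\text{src}^o + x$ may force a PE $\Psi_j$ that hosts both still-running old-mode actors and freshly released new-mode actors to be overloaded.

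Next, I would justify the form of the instantaneous utilization in \Cref{eq:total_u}, treating time $k$ as an offset from $F_\text{src}^o$. For an old-mode actor $A_d^o \in \Psi_j$, its periodic contribution is $u_d^o$ until its last firing in the final iteration $\textit{It}^o$ completes, which happens at the relative instant $k = S_d^o$; the term $u_d^o - h(k - S_d^o)\cdot u_d^o$ encodes exactly this drop. For a new-mode actor $A_d^l \in \Psi_j$, releases start at $F_\text{src}^o + t + S_d^l$, so its contribution switches on at $k = t + S_d^l$, giving the term $h(k - S_d^l - t)\cdot u_d^l$. Summing these two contributions over all actors allocated to $\Psi_j$ yields $U_j(k)$, and schedulability on PE$_j$ at time $k$ amounts to $U_j(k) \le \textit{UB}$ by the definition of the scheduler's utilization bound.

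Then, I would argue that a valid overload-free offset $\delta^{o\rightarrow l}$ must simultaneously (i) satisfy $\delta^{o\rightarrow l} \ge x$, otherwise the MOO property of \Cref{def:moo} (mode independence via iteration-latency preservation) is violated, and (ii) satisfy $U_j(k) \le \textit{UB}$ for every PE $j$ and every instant $k$ in the window where old- and new-mode actors coexist. It suffices to check $k \in [\delta^{o\rightarrow l}, S_\text{snk}^o]$: for $k < \delta^{o\rightarrow l}$ only old-mode actors contribute, and their taskset is schedulable by the steady-state assumption of SPS; for $k > S_\text{snk}^o$ all old-mode actors have finished, since by assumption $S_i^o \le S_\text{snk}^o$ for every $A_i^o$, so the new-mode taskset alone is schedulable, again by SPS. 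Taking the minimum $t$ in $[x, S_\text{snk}^o]$ that meets the utilization condition on every $\Psi_j \in \Psi$ therefore delivers the earliest admissible offset and reproduces \Cref{eq:overload_offset}.

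Finally, as in the proof of \Cref{lm:start_lower}, the earliest starting time of the source actor $A_\text{src}^l$ is $F_\text{src}^o + \delta^{o\rightarrow l}$, and the earliest starting time of any other $A_i^l$ is obtained by adding $S_i^l$, since \Cref{start_time} must still be honored inside the new mode; this yields \Cref{eq:sacsdf_start}. The main obstacle I foresee is the careful justification of the search interval endpoints and, in particular, arguing that pointwise utilization against $\textit{UB}$ is the right schedulability test here; this hinges on the fact that \emph{UB}-based tests for the scheduling algorithms considered (e.g., RM, EDF) remain sufficient when the active taskset changes only at the discrete release instants $k = S_d^o$ and $k = t + S_d^l$, so the worst case within $[\delta^{o\rightarrow l}, S_\text{snk}^o]$ is captured by evaluating $U_j$ at those boundary points rather than by a continuous demand-bound analysis.
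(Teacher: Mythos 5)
Your proposal is correct and follows essentially the same route as the paper's proof: bound $\delta^{o\rightarrow l}$ below by the MOO offset $x$ (from Lemma~\ref{lm:start_lower}) and above by $S_\text{snk}^o$ (the synchronous-protocol bound), decompose $U_j(k)$ into the residual old-mode load and the already-released new-mode load via the Heaviside terms, take the minimum $t$ for which the per-PE utilization stays below \textit{UB} over the whole overlap window, and then propagate to non-source actors by adding $S_i^l$ as in Lemma~\ref{lm:start_lower}. In fact you supply more justification than the paper does (the endpoint arguments for the search interval and the remark on why a pointwise \textit{UB} test suffices when releases change only at discrete instants), both of which the paper leaves implicit.
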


\begin{proof}
	Lemma \ref{lm:start_lower}  shows the lower bound of the earliest starting time for actor $A_i^l$ in the new mode $\textit{SI}^l$.
	However, starting $A_i^l$ at time $\check{\sigma}_i^{o \rightarrow l}$ may overload PE$_j$, i.e.,
	the resulting total utilization of PE$_j$, denoted by $U_j(\check{\sigma}_i^{o \rightarrow l})$, exceeds $\textit{UB}$.
	Therefore, in this case, the earliest starting time $\sigma_i^{o \rightarrow l}$ must be delayed by $\delta^{o\rightarrow l}$ such that $U_j({\sigma}_i^{o\rightarrow l}) \le \textit{UB}$ holds.
	From \Cref{eq:sacsdf_start} and \Cref{eq:start_lower}, we can see that $\delta^{o\rightarrow l}$ is lower bounded by $x$ which corresponds to the MOO protocol.
	In addition, $\delta^{o\rightarrow l}$ is upper bounded by $S_\text{snk}^o$ if we consider \Cref{eq:sacsdf_start} and \Cref{eq:start_upper}.

	$\delta^{o\rightarrow l}$ of interest is the minimum time $t$ in the bounded interval $[x, S_\text{snk}^o]$ that satisfies two conditions.
	
	\underline{Condition 1:} For each PE$_j$, the total utilization cannot exceed \textit{UB} at time $t$, i.e., $U_j(t) \le \textit{UB}$.
	The total utilization $U_j(t)$ in \Cref{eq:total_u} consists of two parts, namely $U_j^o(t)$ and $U_j^l(t)$.
	$U_j^o(t)$ denotes the PE capacity occupied by the actors in mode $\textit{SI}^o$ that are not completed yet.
	Additional PE capacity $U_j^l(t)$ is demanded by the already released actors in the new mode $\textit{SI}^l$.
	
	\underline{Condition 2:} We need to check all time instants $k > t$ in the interval $[t, S_\text{snk}^o]$, such that $U_j(k) \le \textit{UB}$, to guarantee that each PE$_j$ is not overloaded during the mode transition.

\end{proof}
\begin{figure}[!t!b]
	\centering
	\includegraphics[width=.65\columnwidth]{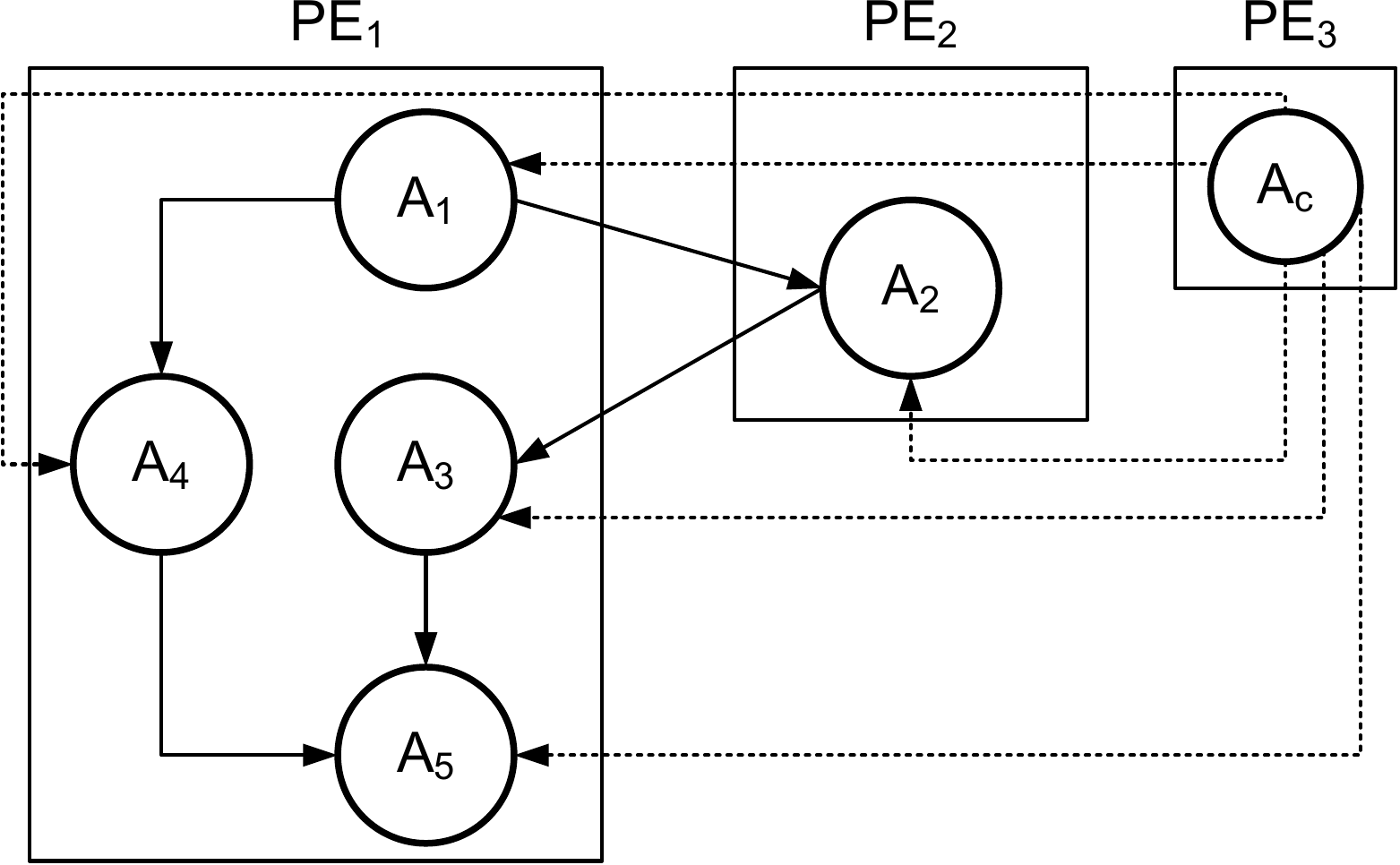}
	\caption{Allocation of all \SADF{} actors in \FIG~\ref{fig:sacsdf_ex} to 3 PEs.}
	\label{fig:sacsdf_alloc}
\end{figure}

\FIG~\ref{fig:sacsdf_alloc} shows all actors of $G_1$ in \FIG~\ref{fig:sacsdf_ex} allocated to 3 PEs and let us assume that the actors allocated to each PE are scheduled using the EDF scheduling algorithm \cite{Liu1973_hrt}. The utilization bound of EDF is given in \cite{Liu1973_hrt} as $\textit{UB} = 1$. Given this allocation and the transition from mode $\textit{SI}^2$ to $\textit{SI}^1$ shown in \FIG~\ref{fig:start_moo}, the lower bound of the earliest starting time $\check{\sigma}_1^{2 \rightarrow 1} = 22$ for actor $A_1^1$ cannot be achieved. At time $22$, only actor $A_1^2$ has completed the last iteration $\textit{It}^2$ on PE$_1$. Starting the new mode $\textit{SI}^1$ at time $22$ corresponds to $\delta^{2 \rightarrow 1} = x = 6$. The total utilization of PE$_1$ demanded by the actors in the old mode $\textit{SI}^2$ at time $22$, i.e., $U_1^2(6)$, can be computed as follows:

{\footnotesize
\begin{align*}
	U_{1}^2 (6) &=\sum_{A_d^2 \in \Psi_1} u_d^2 - h(6 - S_d^2) \cdot u_d^2,~d \in \{1,3,4,5\} \\
	&= u_1^2 - h(6) \cdot u_1^2 + u_3^2 - h(-6) \cdot u_3^2 + u_4^2 - h(-2) \cdot u_4^2 + u_5^2 - h (-14) \cdot u_5^2 \\
	&= 0 + u_3^2 + u_4^2 + u_5^2 = \frac{1}{8} + \frac{3}{8} + \frac{1}{4} = \frac{3}{4}.
\end{align*}}Enabling $A_1^1$ in the new mode $\textit{SI}^1$ at time 22 would yield
\begin{equation*}
	U_1(6) = U_1^2(6) + u_1^1 = \frac{3}{4} + \frac{1}{2} > \textit{UB} = 1,
\end{equation*}
thereby leading to being unschedulable on PE$_1$. In this case, the earliest starting times of all actors in mode $\textit{SI}^1$ must be delayed by $\delta^{2\rightarrow 1} = 8$ to time 24 as shown in \FIG~\ref{fig:start_alloc}.
At time 24, the total utilization demanded by mode $\textit{SI}^2$ actors is
{\footnotesize
\begin{align*}
	U_{1}^2 (8) &= \sum_{A_d^2 \in \Psi_1} u_d^2 - h(8 - S_d^2) \cdot u_d^2,~d \in \{1,3,4,5\} \\
	&= u_1^2 - h(8) \cdot u_1^2 + u_3^2 - h(-4) \cdot u_3^2 + u_4^2 - h(0) \cdot u_4^2 + u_5^2 - h (-12) \cdot u_5^2 \\
	&= 0 + u_3^2 + 0 + u_5^2 = \frac{1}{8} + \frac{1}{4} = \frac{3}{8}. 
\end{align*}}Now, enabling $A_1^1$ in the new mode at time 24 results in the total utilization of PE$_1$ as
\begin{equation*}
	U_1(8) = U_1^2(8) + u_1^1 = \frac{3}{8} + \frac{1}{2} < 1.
\end{equation*}
Next, assuming that the new mode $\textit{SI}^1$ starts at time $24$, we need to check that the remaining actors in the new mode $\textit{SI}^1$, namely $A_3^1$ and $A_5^1$, can start with $S_3^1$ and $S_5^1$ respectively without overloading PE$_1$.
For instance, enabling $A_3^1$ at time 24 results in starting time $\sigma_3^{2\rightarrow 1} = 24 + S_3^1 = 24 + 6 = 30$.
At time 30, the total utilization of PE$_1$ can be obtained according to \EQ~(\ref{eq:total_u}) as follows:
{\footnotesize
\begin{align*}
	U_{1}^2 (8 + 6) &= \sum_{A_d^2 \in \Psi_1} u_d^2 - h(14 - S_d^2) \cdot u_d^2,~ d \in \{1,3,4,5\} \\
	&= u_1^2 - h(14) \cdot u_1^2 + u_3^2 - h(2) \cdot u_3^2 + u_4^2 - h(6) \cdot u_4^2 + u_5^2 - h (-6) \cdot u_5^2 \\
	&= 0 + 0 + 0 + u_5^2 = \frac{1}{4},\\
	U_{1}^1 (8 + 6) &= \sum_{A_d^1 \in \Psi_1} \Big( h( 14 - S_d^1 -8 ) \cdot u_d^1 \Big),~ d \in \{1,3,5\} \\
	&= h(6)u_1^1 + h(0)u_3^1 + h(-8)u_5^1= \frac{1}{2} + \frac{1}{4} = \frac{3}{4},\\
	U_1(8 + 6) &= U_{1}^2 (8+6) + U_{1}^1 (8 + 6) = 1 = \textit{UB}.
\end{align*}}Hence, actors $A_5^2$, $A_1^1$, and $A_3^1$ are schedulable on PE$_1$ using EDF.
Similarly, starting $A_5^1$ at time $\sigma_5^{2\rightarrow 1} = 24 + S_5^1 = 38$ still keeps the resulting set of actors schedulable on PE$_1$.

\begin{figure}[!t!b]
	\centering
	\includegraphics[width=1\columnwidth]{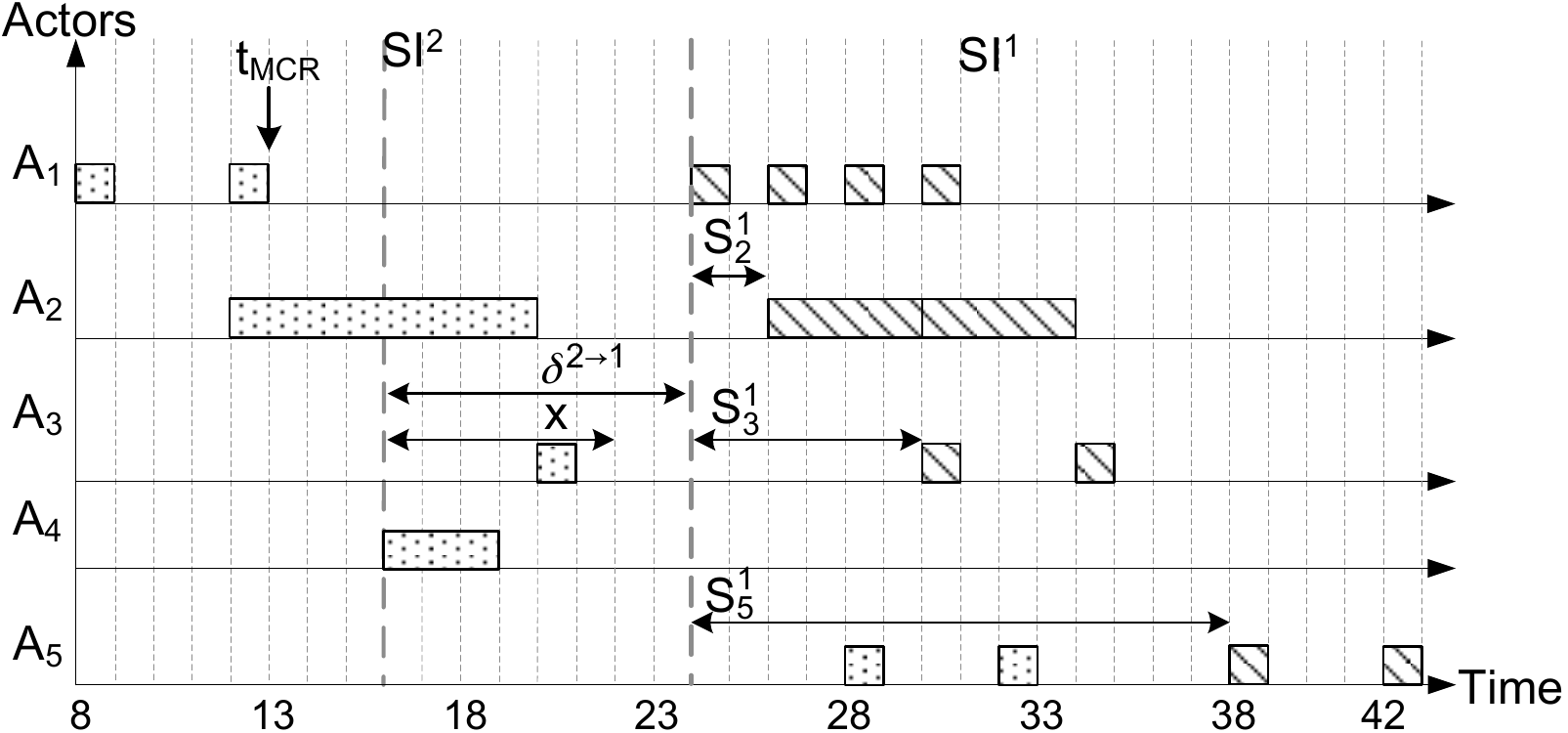}
	\caption{Earliest starting times for transition $\textit{SI}^2$ to $\textit{SI}^1$ on 3 PEs shown in \FIG~\ref{fig:sacsdf_alloc}.}
	\label{fig:start_alloc}
\end{figure}
Using Lemma \ref{lm:overload_offset}, we can quantify the maximum and minimum transition delays for any transition from mode $\textit{SI}^o$ to $\textit{SI}^l$.
\begin{figure*}[!t!b]
	\centering
	\begin{minipage}{.59\textwidth}
		\centering
		\includegraphics[width=1\columnwidth,height=0.32\columnwidth]{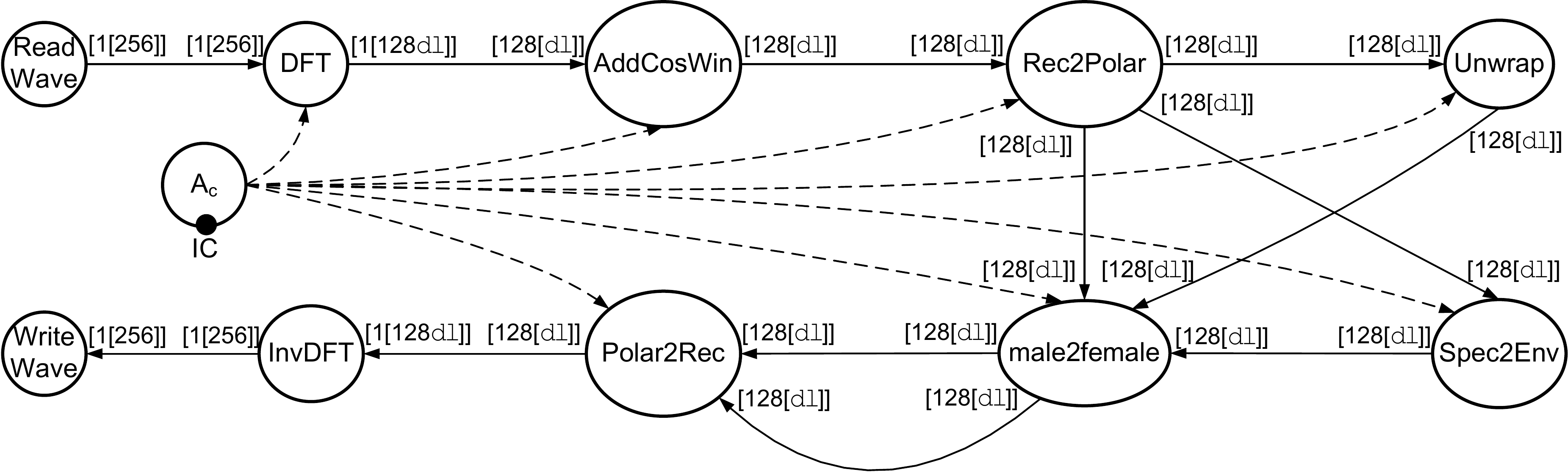}
		\caption{\SADF{} graph of Vocoder.}
		\label{fig:vocoder_sacsdf}
	\end{minipage}
	\hspace{2mm}
	\begin{minipage}{.35\textwidth}
		\centering
		\includegraphics[width=1\columnwidth,height=0.35\columnwidth]{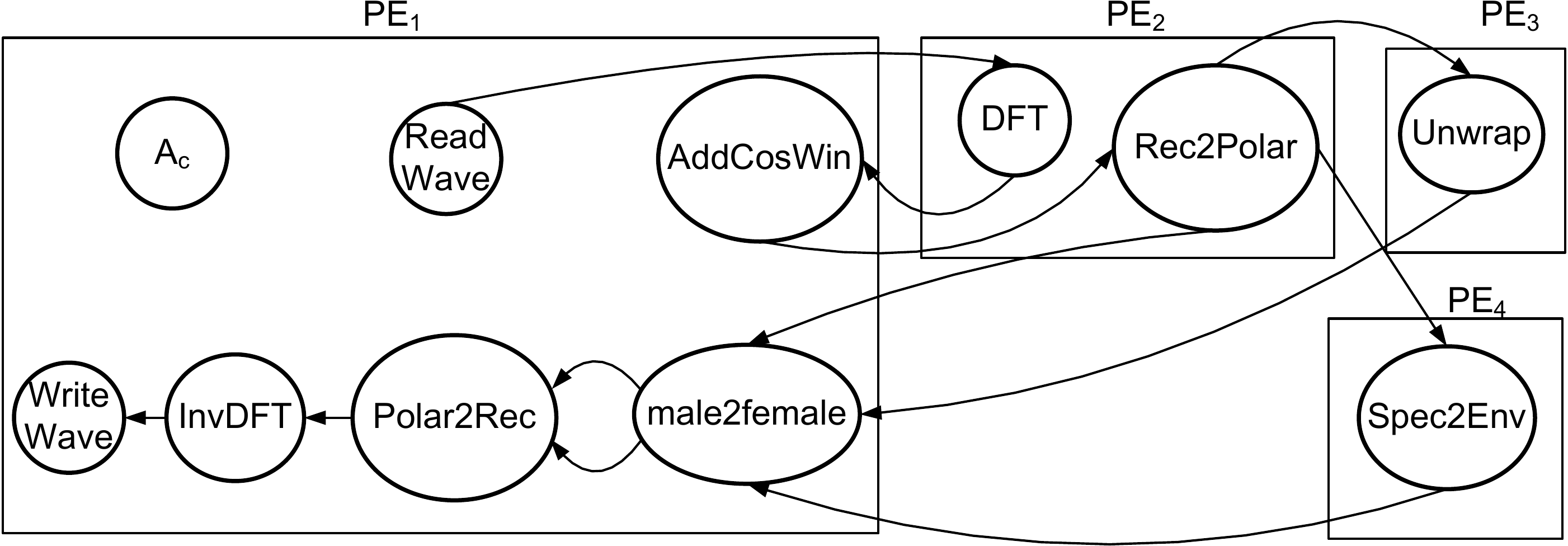}
		\caption{Allocation of dataflow actors of Vocoder to 4 PEs.
			The control edges are omitted to avoid cluttering.}
		\label{fig:vocoder_alloc}
	\end{minipage}
\end{figure*}
\begin{theorem}\label{theo:min_max_delay}
	For a \SADF{} graph under SPS, a fixed allocation of all \SADF{} actors $\Psi = \{\Psi_1,\dots,\Psi_m\}$ to $m$ PEs, and a MCR from mode $\textit{SI}^o$ to $\textit{SI}^l$,
	the minimum transition delay is given by
	\begin{equation}\label{eq:trans_best}
		\Delta^{o\rightarrow l}_\text{min} = \delta^{o\rightarrow l} + S_\text{snk}^l
	\end{equation}
	and the maximum transition delay is given by
	\begin{equation}\label{eq:trans_worst}
		\Delta^{o\rightarrow l}_\text{max} = \delta^{o\rightarrow l} + S_\text{snk}^l + H^o,
	\end{equation}
	where $\delta^{o\rightarrow l}$ is computed by Lemma \ref{lm:overload_offset}, $S^l_\text{snk}$ is the starting time of the sink actor in the new mode $\textit{SI}^l$, and $H^o$ is the iteration period of the old mode $\textit{SI}^o$.
\end{theorem}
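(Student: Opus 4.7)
The plan is to start from the definition of transition delay in Definition~\ref{def:trans_delay} and substitute the expression for the earliest starting time of the sink actor in the new mode given by Lemma~\ref{lm:overload_offset}. Specifically, applying \Cref{eq:sacsdf_start} to the sink actor gives $\sigma_\text{snk}^{o\rightarrow l} = F_\text{src}^o + \delta^{o\rightarrow l} + S_\text{snk}^l$, so that the transition delay can be written as
\begin{equation*}
\Delta^{o\rightarrow l} = \sigma_\text{snk}^{o\rightarrow l} - t_\text{MCR} = \bigl( F_\text{src}^o - t_\text{MCR} \bigr) + \delta^{o\rightarrow l} + S_\text{snk}^l.
\end{equation*}
Since $\delta^{o\rightarrow l}$ and $S_\text{snk}^l$ depend only on the transition $\textit{SI}^o \rightarrow \textit{SI}^l$ and the fixed allocation $\Psi$, they are constants with respect to the arrival time $t_\text{MCR}$. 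The entire variability of $\Delta^{o\rightarrow l}$ is therefore captured by the term $F_\text{src}^o - t_\text{MCR}$.

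Next I would bound this term using \Cref{eq:complete_src}. Writing $\tau = t_\text{MCR} - t_S^o \ge 0$ yields
\begin{equation*}
F_\text{src}^o - t_\text{MCR} = \bigg\lceil \frac{\tau}{H^o} \bigg\rceil H^o - \tau,
\end{equation*}
which is a standard ceiling-gap quantity lying in $[0, H^o]$. The lower end of the interval is attained precisely when $\tau$ is an integer multiple of $H^o$, i.e., when the MCR coincides with the boundary between two iterations of the old mode; in that case $F_\text{src}^o = t_\text{MCR}$ and the term vanishes. The upper end is approached when the MCR arrives just after such a boundary, forcing the graph to wait essentially a full iteration period before the source actor in the old mode completes its last iteration.

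Substituting these two extreme values back into the expression for $\Delta^{o\rightarrow l}$ immediately yields the claimed formulas \Cref{eq:trans_best} and \Cref{eq:trans_worst}. The argument is essentially a direct chain of substitutions combined with an elementary analysis of the ceiling function, so the main obstacle is conceptual rather than technical: one must justify that $\delta^{o\rightarrow l}$ can legitimately be treated as independent of $t_\text{MCR}$. This follows because, by Lemma~\ref{lm:overload_offset}, the offset $\delta^{o\rightarrow l}$ is defined via the utilization profile of the old mode measured relative to its own periodic schedule starting at $F_\text{src}^o$, not relative to $t_\text{MCR}$; shifting $t_\text{MCR}$ simply shifts $F_\text{src}^o$ by the same amount in multiples of $H^o$, leaving the overlap pattern between the two modes invariant. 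Once this invariance is noted, the extraction of the minimum and maximum bounds reduces to the ceiling-gap observation above, and no further case analysis is needed.
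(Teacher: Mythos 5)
Your proposal is correct and follows essentially the same route as the paper: substitute the starting time from Lemma~\ref{lm:overload_offset} into Definition~\ref{def:trans_delay}, observe that only the term $F_\text{src}^o - t_\text{MCR}$ varies with the request time, and bound it between $0$ and $H^o$. Your explicit ceiling-gap formulation is in fact slightly more careful than the paper's two-case argument (which informally places the MCR at the ``beginning'' of an iteration to attain $H^o$, whereas that value is really a supremum approached just after an iteration boundary), but the substance is identical.
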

\begin{proof}
	For a MCR from mode $\textit{SI}^o$ to $\textit{SI}^l$, the transition delay $\Delta^{o \rightarrow l}$ of a \SADF{} graph is given
	in \DEF~\ref{def:trans_delay} as $\Delta^{o \rightarrow l}= \sigma_\text{snk}^{o \rightarrow l} - t_\text{MCR}$, where
	the earliest starting time of the sink actor is calculated
	as $\sigma_\text{snk}^{o \rightarrow l} = F_\text{src}^o + \delta^{o \rightarrow l} + S^l_\text{snk}$ according to Lemma \ref{lm:overload_offset}.
	Therefore, $\Delta^{o \rightarrow l}$ can be rewritten as $\Delta^{o \rightarrow l}= F^o_\text{src} + \delta^{o \rightarrow l} + S_\text{snk}^l - t_\text{MCR}$.
	Essentially, $\Delta^{o \rightarrow l}$ is composed of three parts.
	In the first part, the MOO transition protocol together with a fixed allocation of the \SADF{} actors determine $\delta^{o \rightarrow l}$.
	The second part $S_\text{snk}^l$ results from the SPS framework.
	These two parts thus can be determined at compile-time.
	The third part $F^o_\text{src} - t_\text{MCR}$ depends on when the MCR occurs, namely at $t_\text{MCR}$, which can only be
	determined at run-time.
	In the following, we distinguish two cases for $t_\text{MCR}$:
	
	\underline{Case 1:} Assume that the MCR occurs at the end of an iteration of the source actor in the old mode $\textit{SI}^o$, i.e., $t_\text{MCR} = F^o_\text{src}$.
	Then, the source actor shall be only delayed by $\delta^{o \rightarrow l}$ to start in the new mode $\textit{SI}^l$ according to Lemma \ref{lm:overload_offset}, thereby guaranteeing the fastest possible start of the new mode $\textit{SI}^l$.
	As a consequence, it results in the minimum possible transition delay.
	Therefore, substituting $t_\text{MCR} = F^o_\text{src}$, we obtain
	\begin{equation*}
		\Delta^{o\rightarrow l}_\text{min} = F^o_\text{src} + \delta^{o\rightarrow l} + S_\text{snk}^l - F^o_\text{src} = \delta^{o\rightarrow l} + S_\text{snk}^l.
	\end{equation*}
	
	\underline{Case 2:} Assume that the MCR occurs at the beginning of an iteration of the source actor in the old mode $\textit{SI}^o$, i.e., $t_\text{MCR} = F^o_\text{src} - H^o$.
	Then, the source actor cannot start in the new mode before it completes the whole iteration in the old mode $\textit{SI}^o$ followed by the delay $\delta^{o \rightarrow l}$ according to Lemma \ref{lm:overload_offset}.
	Therefore, the maximum transition delay is computed as follows:
	\begin{equation*}
		\Delta^{o\rightarrow l}_\text{max} = F^o_\text{src} + \delta^{o\rightarrow l} + S_\text{snk}^l - (F^o_\text{src} - H^o) = \delta^{o\rightarrow l} + S_\text{snk}^l + H^o.
	\end{equation*}
\end{proof}

It can be seen from Theorem \ref{theo:min_max_delay} that the maximum and minimum transition delays solely depend on the allocation of \SADF{} actors and the old and new modes in question, irrespective of the previously occurred transitions.
The old and new modes determine $H^o$ and $S_\text{snk}^l$, respectively, while the allocation of \SADF{} actors determines the value of $\delta^{o \rightarrow l}$.
Here, the offset $x$ due to our MOO protocol is captured in $\delta^{o \rightarrow l}$ and can be considered as performance overhead if $x \neq 0$.
The other parts, namely $H^o$ and $S_\text{snk}^l$, in the maximum and minimum transition delays cannot be avoided as they will
be present in any transition protocol.
 \section{Case Studies}\label{sec:Case Study}
To evaluate our proposed \SADF{} MoC and MOO protocol, in this section, we present two case studies. In the first case study, we model a real-life adaptive streaming application, called Vocoder, with our \SADF{} MoC proposed in \SEC~\ref{sec:dyn_moc_def} and apply the hard real-time analysis proposed in \SEC~\ref{sec:hrt_sacsdf}. With this case study, we show that the \SADF{} MoC is capable of capturing different application modes and the transitions between them. Then, in the second case study, we model another real-life adaptive streaming application, called MP3decoder, with MADF and we focus on analyzing the transition delays and demonstrating the effectiveness of our MADF model armed with the proposed MOO transition protocol compared to the well-known FSM-SADF model \cite{Geilen2010_codes10} which also can capture modes/scenarios. In this case study, we adopt self-timed scheduling for both our MADF and FSM-SADF models in the steady-state. The major difference between these models in this case study is their transition protocol which is the MOO protocol in our MADF model and the self-timed protocol in FSM-SADF. Another example of the application of our MOO protocol can be found in \cite{niknam2017energy}.

\subsection{Case Study 1}
In this section, we consider a real-life adaptive application from the StreamIT benchmark suit~\cite{Gordon2006_asplos}, called Vocoder,
which implements a phase voice encoder and performs pitch transposition of recorded sounds from male to female.
We modeled Vocoder using a \SADF{} graph with 4 modes, which capture different workloads.
The \SADF{} graph of Vocoder is shown in \FIG~\ref{fig:vocoder_sacsdf}. Depending on the desired quality of audio encoding and various performance requirements, the resource manager as a middle-ware or OS-like component for the MPSoC may switch between four different modes of Vocoder at run-time.
The four modes $\mathcal{S} = \{\textit{SI}^8, \textit{SI}^{16}, \textit{SI}^{32}, \textit{SI}^{64}\}$ specify different lengths of the Discrete Fourier Transform~(DFT), denoted by $\SYM{dl} \in \{8, 16,32,64\}$.
Mode $\textit{SI}^8$ $(\SYM{dl} = 8)$ requires the least amount of computation at the cost of the worst voice encoding quality among all DFT lengths.
Mode $\textit{SI}^{64}$ $(\SYM{dl} = 64)$ produces the best quality of voice encoding among all modes, but is computationally intensive.
The other two modes $\textit{SI}^{16}$ and $\textit{SI}^{32}$ explore the trade-off between the quality of the encoding and computational workload. The resource manager, therefore, can take advantage of this trade-off and adjust the quality of the encoding according to the available resources, such as energy budget and number of PEs, at run-time.
A transition from one mode to any other one is possible, thereby resulting in totally 12 possible transitions.
At run-time, reconfiguration of the parameter \SYM{dl} is triggered by the environment, e.g., the resource manager in this case.
Subsequently, control actor $A_c$ propagates \SYM{dl} to the data-flow actors shown in \FIG~\ref{fig:vocoder_sacsdf} through the dashed-lined edges.

We measured the WCETs of all dataflow actors in \FIG~\ref{fig:vocoder_sacsdf} in the four modes on an ARM Cortex-A9~\cite{arm-a9} processor.
All dataflow actors were compiled using the compiler \code{arm-xilinx-eabi-gcc 4.7.2} with the vectorization option.
The WCETs of all actors in all four modes are given in \TAB~\ref{tab:vocoder_wcet}.
It is worth to note that in mode $\textit{SI}^8$, actors \SYM{Spec2Env} and \SYM{male2female} exhibit exceptionally high WCETs.
It is because parameter \SYM{dl} represents the size of the inner-most loop in the computation of actors \SYM{Spec2Env} and \SYM{male2female}.
Small \SYM{dl} (in this case $\SYM{dl}=8$) leads to the fact that the inner-most loop cannot be vectorized by the compiler.
In the other modes from $\textit{SI}^{16}$ to $\textit{SI}^{64}$, larger sizes of the inner-most loop (\SYM{dl} equal to 16, 32, and 64, respectively) lead to full vectorization of the computation of actors \SYM{Spec2Env} and \SYM{male2female}.
Therefore, in these three modes, the WCETs of actors \SYM{Spec2Env} and \SYM{male2female} are even smaller than the ones in mode $\textit{SI}^8$.
The dataflow actors of Vocoder are allocated to 4 PEs as shown in \FIG~\ref{fig:vocoder_alloc}.
This allocation guarantees that the shortest periods~(maximum throughput) in the steady-states of all modes can be achieved.

\begin{table}[!t!b]
	\centering
	\caption{WCETs of all actors in Vocoder (in clk. cycles).}
	\resizebox{0.475\textwidth}{!}{		\begin{tabular}{ |c | c | c | c | c | c | c | c | c | c | c | c | c |}
			\hline
			\begin{turn}{90} \scriptsize{Mode}\end{turn}        & 	\begin{turn}{90}\scriptsize{\SYM{ReadWave}}\end{turn}      & \begin{turn}{90}\scriptsize{\SYM{DFT}}\end{turn}    & \begin{turn}{90}\scriptsize{\SYM{AddCosWin}}\end{turn}  & \begin{turn}{90}\scriptsize{\SYM{Rec2Polar}}\end{turn} & \begin{turn}{90}\scriptsize{\SYM{Unwrap}}\end{turn}  & \begin{turn}{90}\scriptsize{\SYM{Spec2Env}}\end{turn}  & \begin{turn}{90}\scriptsize{\SYM{male2female}}\end{turn}  & \begin{turn}{90}\scriptsize{\SYM{Polar2Rec}}\end{turn}  & \begin{turn}{90}\scriptsize{\SYM{InvDFT}}\end{turn}  & \begin{turn}{90}\scriptsize{\SYM{WriteWave}}\end{turn}  \\ \hline \hline
			$\textit{SI}^8$         & 3704          & 16775   & 16  & 90  & 359  & 7168  & 1093  &3    & 236       &3660      \\\hline
			$\textit{SI}^{16}$      & 3704          & 35121   & 35  & 183 & 691  & 1163  & 138    &260  & 644       &3660       \\\hline
			$\textit{SI}^{32}$      & 3704          & 71337   & 75  & 366 & 1393  &1392    &210     &507  & 988     &3660      \\\hline
			$\textit{SI}^{64}$      & 3704          & 144531  & 150 &1156 &2346  &1696    &426     &1056 & 3630 &3660       \\\hline
	\end{tabular}}
	\label{tab:vocoder_wcet}
\end{table}
\begin{table}[!t!b]
	\centering
	\caption{Performance results of four modes of Vocoder in the steady-state.} 
		\resizebox{0.475\textwidth}{!}{			\begin{tabular}{ |c | c | c | c | c | c | c | c |}
				\hline
				Mode        & Period ($T$ in clk.)      & Total utilization ($U$)    & Iteration latency ($L$)  \\ \hline \hline
				$\textit{SI}^8$         & 917504       & 1.24                 & 7339608                    \\ \hline
				$\textit{SI}^{16}$      & 148864       & 2.36                 & 1191436                     \\\hline
				$\textit{SI}^{32}$      & 178176       & 3.19                 & 1425448                       \\ \hline
				$\textit{SI}^{64}$      & 300288       & 3.4                  & 2402550                     \\\hline
			\end{tabular}}
	\label{tab:vocoder_period}
	\end{table}

\TAB~\ref{tab:vocoder_period} shows the performance results for the four modes in their steady-state under SPS. For instance, the second column at the first row in \TAB~\ref{tab:vocoder_period} indicates that it is guaranteed for sink actor \SYM{WriteWave} to produce 256 samples per $917451$ clock cycles in mode $\textit{SI}^{8}$.
This is the ``worst-case'' performance among all four modes because the \SYM{Spec2Env} actor exhibits exceptionally high workload (\textit{cf}. WCETs in \TAB~\ref{tab:vocoder_wcet}) in mode $\textit{SI}^8$.
Consequently, actor \SYM{Spec2Env} becomes the ``bottleneck'' actor, so that mode $\textit{SI}^8$ cannot be scheduled with higher throughput~(shorter period).
Nevertheless, all mode $\textit{SI}^8$ actors as a whole require a total processor utilization ($U$) of only $1.24$ (see the third column in \TAB~\ref{tab:vocoder_period}) which is the least among all modes.
From \TAB~\ref{tab:vocoder_period}, we can see that
\SADF{} together with the SPS framework brings another advantage of efficiently utilizing PE resources.
For example, in case that Vocoder is switched to a mode with lower processor utilization, idle capacity of PEs can be efficiently utilized by admitting other applications at run-time without introducing interference to the currently running Vocoder.

Now, we focus on the performance results of the MOO protocol, namely transition delays, for all possible transitions between the four modes of Vocoder.
\begin{table}[!t!b]
	\centering
	\caption{Performance results for all mode transitions of Vocoder.} 
		 \resizebox{0.47\textwidth}{!}{		\begin{tabular}{ |c | c | c | c | c | c | c | c |}
			\hline
			Transition                                          & $\Delta_\text{min}^{o \rightarrow l}$ (in clk.) & $\Delta_\text{max}^{o \rightarrow l}$ (in clk.) & $x$ (in clk.) & $\delta^{o \rightarrow l}$ (in clk.)\\
			($\textit{SI}^o$ to $\textit{SI}^l$)                 &                      &     &                 &        \\\hline\hline
			$\textit{SI}^8 \rightarrow \textit{SI}^{64}$         & 3636815   & 4554266        & 1234264   & 1234264             \\ \hline
			$\textit{SI}^8 \rightarrow \textit{SI}^{32}$         & 2903988   & 3821439        & 1478540   & 1478540                 \\ \hline
			$\textit{SI}^8 \rightarrow \textit{SI}^{16}$         & 2728479   & 3645930        & \textbf{1537043}   & 1537043             \\ \hline
			$\textit{SI}^{16} \rightarrow \textit{SI}^{64}$      & 2402550   & 2551480         & 0           & 0                \\\hline
			$\textit{SI}^{16} \rightarrow \textit{SI}^{32}$      & 1425448   & 1574378         & 0           & 0                  \\\hline
			$\textit{SI}^{16} \rightarrow \textit{SI}^{8}$       & 7339608   & 7488538         & 0           & 0                \\\hline
			$\textit{SI}^{32} \rightarrow \textit{SI}^{64}$      & 2402550   & 2580731         & 0           & 0                \\ \hline
			$\textit{SI}^{32} \rightarrow \textit{SI}^{16}$      & 1425448   & 1603629         & 234012     & 234012            \\ \hline
			$\textit{SI}^{32} \rightarrow \textit{SI}^{8}$       & 7339608   & 7517789         & 0           & 0                \\ \hline
			$\textit{SI}^{64} \rightarrow \textit{SI}^{32}$      & 2402550   & 2702869         & 977102     & 977102             \\\hline
			$\textit{SI}^{64} \rightarrow \textit{SI}^{16}$      & 2402550   & 2702869         & 1211114   & 1211114            \\\hline
			$\textit{SI}^{64} \rightarrow \textit{SI}^{8}$       & 7339608   & 7639927         & 0           & 0                \\\hline
		\end{tabular}}
\label{tab:vocoder_trans}
\end{table}
\TAB~\ref{tab:vocoder_trans} shows both the minimum and maximum transition delays in accordance with Theorem \ref{theo:min_max_delay} for all transitions.
We can see in the second column of \TAB~\ref{tab:vocoder_trans} that, in the best case, the transition delays for 6 out of 12 transitions remain the same as the iteration latencies of the new modes.
This can be seen as $x = 0$ shown in the fourth column.
In these 6 transitions, the proposed MOO protocol does not introduce any extra delay.
In the 6 remaining transitions, as expected, the MOO protocol introduces offset $x > 0$ to the transitions from an old mode with a longer iteration latency to a new mode with a shorter iteration latency.
For instance, the largest $x$ (in bold shown in \TAB~\ref{tab:vocoder_trans}) happens in case of a transition from mode $\textit{SI}^8$ with the longest iteration latency (see the fourth column in \TAB~\ref{tab:vocoder_period}) to mode $\textit{SI}^{16}$ with the shortest iteration latency.
To quantify $x$, we compute the percentage of $x$ compared to both minimum and maximum transition delays as
\begin{equation*}
	\Omega_\text{min} = \frac{x}{\Delta_\text{min}^{o\rightarrow l}}\times 100\%,~~\Omega_\text{max} = \frac{x}{\Delta_\text{max}^{o\rightarrow l}}\times 100\%.
\end{equation*}
$\Omega_\text{min}$ varies from the worst-case $56\%$ to the best case $16\%$ with an average of $41\%$, whereas $\Omega_\text{max}$ varies from the worst-case $44\%$ to the best case $14\%$ with an average of $33\%$.
Therefore, the increase of the transition delays due to the MOO protocol is reasonable for this real-life application.

Next, we consider the effect of the actor allocation shown in \FIG~\ref{fig:vocoder_alloc} on the earliest starting times of actors in the new mode upon a transition~(\textit{cf.} Lemma \ref{lm:overload_offset}).
In this particular example, we find out that no extra delay is incurred to any actor in all transitions due to the fixed actor allocation.
This can be seen from the fourth and fifth columns in \TAB~\ref{tab:vocoder_trans}, where $\delta^{o\rightarrow l} = x$.

\begin{table}[!t!b]
	\centering
	\caption{The period and iteration latency of modes in MP3 decoder in clk. cycles.} 
	\resizebox{0.475\textwidth}{!}{		\begin{tabular}{ |c | c | c | c | c | c | c | c |}
			\hline
			Mode	&	s-s	&	s-l	&	l-s	&	l-l	&	m \\ \hline \hline
			Period ($T$)	& 5830000	&	5785970	&	 5830000	&	4640000	&	5760000                    \\ \hline
			Iteration latency ($L$)       & 9434720   & 9234570  & 9278600 & 7466400 & 9089900                     \\\hline
	\end{tabular}}
	\label{tab:period-latency-mp3}
\end{table}
\begin{table*}[!t!b]
	\centering
	\caption{Performance results of MP3 decoder for four different mode transition sequences using MADF and FSM-SADF models.} 
	\resizebox{1\textwidth}{!}{		\begin{tabular}{ |c|c|c|c|c|c|c|c||c|c|c|c|c|c|c|}
			\cline{1-15}
			
			\multirow{2}{*}{Mode Sequence} & \multicolumn{7}{ |c || }{FSM-SADF\cite{Geilen2010_codes10}} & \multicolumn{7}{ | c| }{MADF} \\ \cline{2-15}
			& \multicolumn{4}{ |c | }{Iteration latency} & \multicolumn{3}{ |c || }{Transition delay} & \multicolumn{4}{ |c | }{Iteration latency} & \multicolumn{3}{ |c | }{Transition delay}\\ \hline \hline
			
			\multirow{2}{*}{s-s$\rightarrow$s-l$\rightarrow$m$\rightarrow$l-l} &$L^{\text{s-s}} $&$L^{\text{s-l}}$& $L^{\text{m}}$& $L^{\text{l-l}}$ &  $\Delta^{\text{s-s} \rightarrow \text{s-l}}$& $\Delta^{\text{s-l}\rightarrow \text{m}}$ & $\Delta^{\text{m} \rightarrow \text{l-l}}$&$L^{\text{s-s}} $&$L^{\text{s-l}}$& $L^{\text{m}}$& $L^{\text{l-l}}$ &  $\Delta^{\text{s-s} \rightarrow \text{s-l}}$& $\Delta^{\text{s-l}\rightarrow \text{m}}$ & $\Delta^{\text{m} \rightarrow \text{l-l}}$ \\   \cline{2-15} & 9434720 & 9434670 & 9310400 & \textbf{9310400} & 9434670 &  \textbf{9310400} & 9310400 & 9434720 & 9234570& 9089900 &  \textbf{7466400} & 10032600 &  \textbf{9261700} & 9089900\\ \hline \hline
			
			\multirow{2}{*}{s-s$\rightarrow$l-l$\rightarrow$s-l$\rightarrow$m} &$L^{\text{s-s}} $&$L^{\text{l-l}}$& $L^{\text{s-l}}$& $L^{\text{m}}$ &  $\Delta^{\text{s-s}\rightarrow \text{l-l}}$& $\Delta^{\text{l-l}\rightarrow \text{s-l}}$ & $\Delta^{\text{s-l}\rightarrow \text{m}}$&$L^{\text{s-s}} $&$L^{\text{l-l}}$& $L^{\text{s-l}}$& $L^{\text{m}}$ &  $\Delta^{\text{s-s}\rightarrow \text{l-l}}$& $\Delta^{\text{l-l}\rightarrow \text{s-l}}$ & $\Delta^{\text{s-l}\rightarrow \text{m}}$ \\   \cline{2-15} & 9434720 & \textbf{9434700}  & 9434670 & 9217800 & 9434700 & 9434670 &  \textbf{9217800} & 9434720& \textbf{7466400} & 9234570 & 9089900 & 9434700 & 9234500 &  \textbf{9261700}\\ \hline \hline
			
			\multirow{2}{*}{l-s$\rightarrow$s-l$\rightarrow$m$\rightarrow$l-l} &$L^{\text{l-s}} $&$L^{\text{s-l}}$& $L^{\text{m}}$& $L^{\text{l-l}}$ & $\Delta^{\text{l-s}\rightarrow \text{s-l}}$& $\Delta^{\text{s-l}\rightarrow \text{m}}$ & $\Delta^{\text{m}\rightarrow \text{l-l}}$&$L^{\text{l-s}} $&$L^{\text{s-l}}$& $L^{\text{m}}$& $L^{\text{l-l}}$ &  $\Delta^{\text{l-s}\rightarrow \text{s-l}}$& $\Delta^{\text{s-l}\rightarrow \text{m}}$ & $\Delta^{\text{m}\rightarrow \text{l-l}}$\\   \cline{2-15} & 9278600 & 9278570 & 9197200 & \textbf{9197200} & 9278570 &  \textbf{9197200} & 9197200 & 9278600 & 9234570 & 9089900 & \textbf{7466400} & 9876500 & \textbf{9261700}& 9089900 \\ \hline \hline			 
			
			\multirow{2}{*}{s-s$\rightarrow$l-s$\rightarrow$s-l$\rightarrow$l-l} &$L^{\text{s-s}}$ & $L^{\text{l-s}}$ & $L^{\text{s-l}}$& $L^{\text{l-l}}$ &  $\Delta^{\text{s-s}\rightarrow \text{l-s}}$& $\Delta^{\text{l-s}\rightarrow \text{s-l}}$ & $\Delta^{\text{s-l}\rightarrow \text{l-l}}$&$L^{\text{s-s}}$ & $L^{\text{l-s}}$ & $L^{\text{s-l}}$& $L^{\text{l-l}}$ &  $\Delta^{\text{s-s}\rightarrow \text{l-s}}$& $\Delta^{\text{l-s}\rightarrow \text{s-l}}$ & $\Delta^{\text{s-l}\rightarrow \text{l-l}}$\\   \cline{2-15} & 9434720 & 9434700 & 9434670 & \textbf{8661500} & 9434700 & 9434670 & 8661500 & 9434720& 9278600 & 9234570 & \textbf{7466400} & 10032600 & 9876500 & 9234600 \\ \hline 			 
	\end{tabular}}
	\label{tab:mp3-performance}
\end{table*}
\subsection{Case Study 2}
	To further evaluate the MOO protocol, presented in \SEC~\ref{sec:moo_trans}, in this section, we performed an experiment with the MP3 decoder application, which is a real-life adaptive streaming application, taken from \cite{Geilen2010_codes10}. This MP3 decoder is a frame-based algorithm that retrieves audio frames from the incoming compressed bitstream. In the MP3 decoder, each audio frame can be decoded using a different method. In total, MP3 decoder has five individual decoding methods for audio frames that are denoted as \{s-s, l-l, l-s, s-l, m\}.

Each of these methods can be represented accurately by an SDF graph. Therefore, the application behavior can be accurately captured using FSM-SADF \cite{Geilen2010_codes10} rather than conservatively capture these methods in a static dataflow model. Consequently, a much tighter performance can be guaranteed by FSM-SADF graph than SDF. Note that since each mode in our MADF model is represented as a CSDF graph, our MADF is more expressive than FSM-SADF and therefore, the MP3 decoder can be also properly modeled with MADF. The period and iteration latency of each mode are given in \TAB~\ref{tab:period-latency-mp3}.

Let us now compare the throughput of MP3 decoder modeled as MADF and FSM-SADF graphs. To compute the throughput of MP3 decoder modeled by the FSM-SADF, we use the publicly available SDF$^3$ tool set \cite{stuijk2006sdf}. Since the type of frames may change non-deterministically in arbitrary orders, SDF$^3$ detects the worst-case mode transition using the state-space exploration approach developed in \cite{Geilen2010_codes10} for FSM-SADF to lower bound the throughput. To compute the worst-case throughput of the application, we use the \texttt{sdf3analysis-fsmsadf} tool from SDF$^3$. Similarly, we use the same approach to compute the throughput of our MADF model that uses the MOO protocol. For both models, the same throughput of 1.75$\cdot10^{(-7)}$ frame per clock cycle is achieved. Therefore, both models perform equally well in terms of the worst-case throughput they can guarantee and the delay introduced by our MOO protocol during mode transitions has no impact on the worst-case throughput.

Now, we focus on the performance results of our MADF and FSM-SADF models in terms of the iteration latency of the modes and the transition delay. The results of this comparison for four different mode transition sequences is give in \TAB~\ref{tab:mp3-performance}. In this table, for each mode transition sequence, the iteration latency of each mode and the transition delay of each mode transition are given for our MADF model that uses the MOO protocol and the FSM-SADF model that uses the self-timed protocol. From this table, we can clearly see that our MADF retains the iteration latency of each mode irrespective of the mode transition sequences. Using the FSM-SADF model, however, the iteration latency of modes in the steady-state is accordingly changed with respect to the order of mode transitions. For instance, mode l-l has different iteration latency, $L^{\text{l-l}}$, of 9310400, 9434700, 9197200, and 8661500 for the different mode transition sequences, when using FSM-SADF. In contrast, the same mode l-l has a constant iteration latency of 7466400 under our MADF model (bolded in \TAB~\ref{tab:mp3-performance}). Therefore, the iteration latency of modes in the steady-state can not be guaranteed under the FSM-SADF model as it is highly dependent on the order of mode transitions which is not known beforehand at design-time.

From \TAB~\ref{tab:mp3-performance}, we can also see that by changing the iteration latency of the modes, the transition delays are also changed. Although the transition delays are sometimes shorter in the FSM-SADF model, the FSM-SADF model is potentially unpredictable. Our MADF model, however, is completely predictable because the (minimum) transition delays for all mode transitions can be computed beforehand at design-time according to Theorem \ref{theo:min_max_delay}. For instance, the transition from mode s-l to mode m has different transition delay, $\Delta^{\text{s-l}\rightarrow \text{m}}$, of 9310400, 9217800, and 9197200 for different mode transition sequences under the FSM-SADF model whereas this mode transition has a constant transition delay of 9261700 under our MADF model (bolded in \TAB~\ref{tab:mp3-performance}).
 \section{Conclusion}\label{sec:Conclusion}
In this paper, we have proposed the novel Mode-Aware Data Flow (MADF) model which can capture effectively the adaptive nature of modern streaming applications. Moreover, as an important part of the operational semantics of MADF, we have proposed a novel protocol for mode transitions. The main advantage of this transition protocol is that, in contrast to the self-timed transition protocol, it avoids timing interference between modes upon mode transitions. As a result, any mode transition can be analyzed independently from others that occurred in the past. Furthermore, based on the transition protocol, we have proposed a hard real-time analysis and scheduling framework to reason and guarantee timing constraints by avoiding processor overloading during mode transitions. Finally, we evaluate the effectiveness of our MADF model compared with the well-know FSM-SADF model by conducting two case studies using two real-life adaptive streaming applications. 
\bibliography{references}
\bibliographystyle{unsrt}

\begin{IEEEbiography}[{\includegraphics[width=1in,height=1.25in,clip,keepaspectratio]{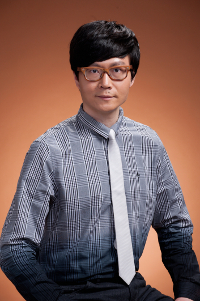}}]
{Jiali Teddy Zhai} was born on 16th of October, 1982. In September 2009, he received Diplom Informatik (Master Degree in Computer Science) from Friedrich-Alexander Universitat Erlangen-Nurnberg, Germany. During his study, Teddy worked at Institute for Hardware-Software-Co-Design headed by Prof. Jürgen Teich with the focus on designing high-level synthesis tools targeting high-performance computing systems based on FPGA platforms. In October 2009, Teddy joined the Leiden Embedded Research Center (LERC) which is part of the Leiden Institute of Advanced Computer Science (LIACS) at Leiden University. He was appointed as a research and teaching assistant (Ph.D. student). He was involved in the NEtherlands STreaming (NEST) project in collaboration with NXP semiconductor, Philips Healthcare, etc. In April 2014 Teddy joined Irdeto B.V. Netherlands as Senior Security Engineer. From June 2016 on, Teddy has been appointed by Green Hills Software Netherlands as Security Solution Architect, Europe.
\end{IEEEbiography}

\begin{IEEEbiography}[{\includegraphics[width=1in,height=1.25in,clip,keepaspectratio]{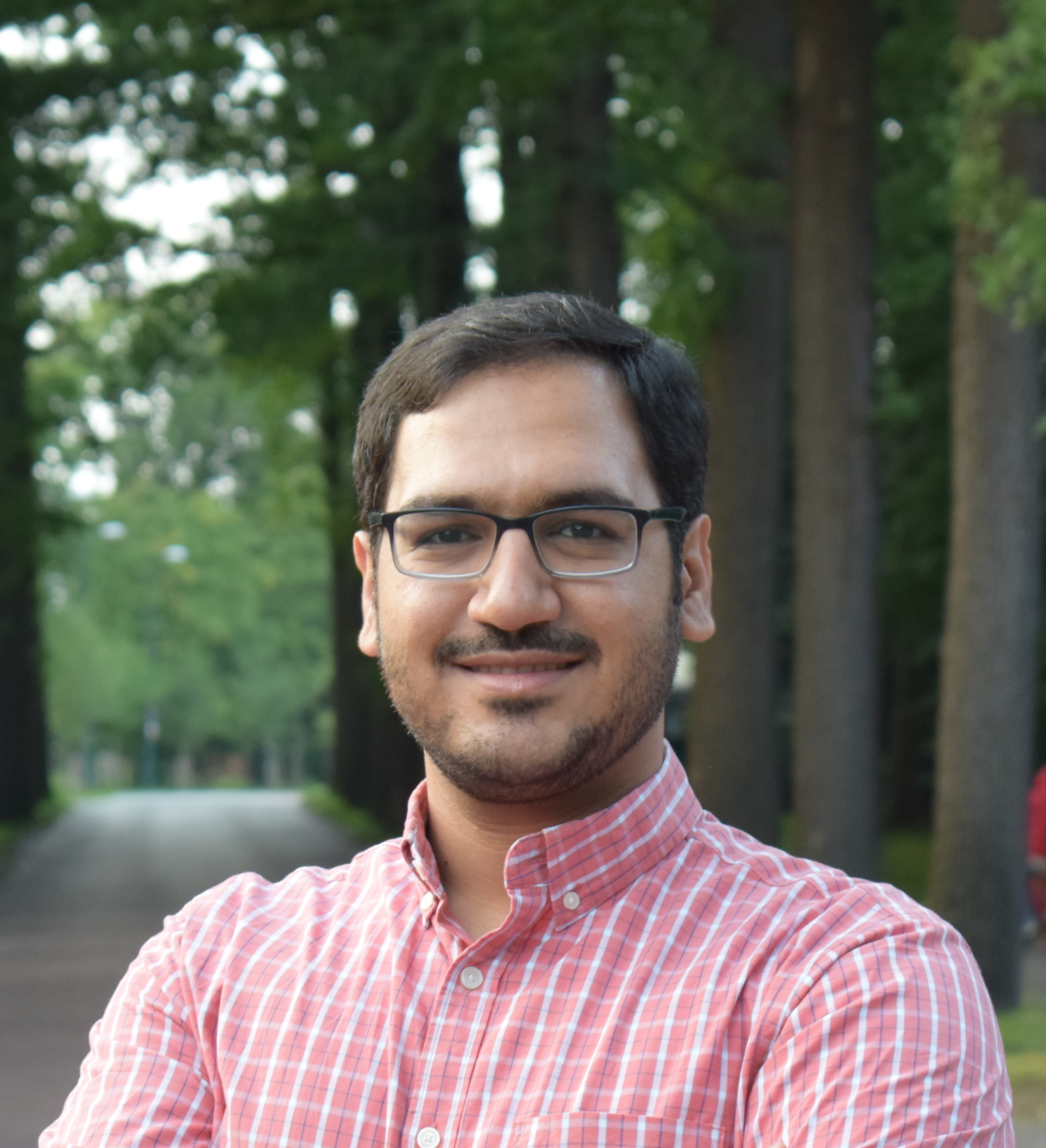}}]
	{Sobhan Niknam} received his B.Sc. and M.Sc. degrees in computer engineering from Shahed University and Iran University of Science and Technology both in Tehran, Iran, in 2012 and 2014, respectively. Since March 2015, he has been with the Leiden Embedded Research Center (LERC) which is part of the Leiden Institute of Advanced Computer Science (LIACS) at Leiden University in The Netherlands, where he is working toward the Ph.D. degree in computer science. His research interests include real-time embedded systems and system-level multi-core systems design.
\end{IEEEbiography}

\begin{IEEEbiography}[{\includegraphics[width=1in,height=1.25in,clip,keepaspectratio]{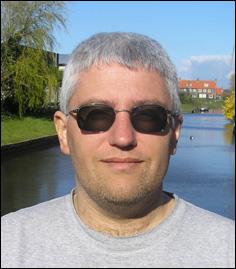}}] 
	{Todor Stefanov} (S'01--M'05) received the Dipl.Ing. and M.S. degrees in computer engineering from The Technical University of Sofia, Bulgaria, in 1998 and the Ph.D. degree in computer science from Leiden University, The Netherlands, in 2004. Currently, he is an associate professor in the Leiden Institute of Advanced Computer Science at Leiden University and the head of the Leiden Embedded Research Center (LERC) which is a medium-size
	research group with a strong track record in the area of system-level modeling and synthesis, programming, and implementation of heterogeneous embedded systems. Dr. Stefanov is a recipient of the prestigious 2009 IEEE TCAD DONALD O.PEDERSON BEST PAPER AWARD for his journal article "Systematic and Automated Multi-processor System Design, Programming, and Implementation" published in the IEEE Transactions on Computer-Aided Design of Integrated Circuits and Systems (TCAD). He is editorial board member of the Springer Journal on Embedded Systems. He has also been editorial board member of the International Journal of Reconfigurable Computing and guest associate editor of ACM Transactions on Embedded Computing Systems (2013). He has been General Chair of ESTIMedia 2015 and Local Organization Co-Chair of ESWeek 2015. Moreover, he serves (has served) on the organizational committees of several leading conferences, symposia, and workshops, such as DATE, ACM/IEEE CODES+ISSS, RTSS, IEEE ICCD, IEEE/IFIP VLSI-SoC, ESTIMedia, SAMOS (as TPC member), and IEEE ESTIMedia, ACM SCOPES (as Program Chair). Dr. Stefanov (co-)authored more than 80 scientific papers. His research interests include several aspects of embedded systems design, with particular emphasis on system-level design automation, multiprocessor systems-on-chip design, and hardware/software co-design.
\end{IEEEbiography}

\end{document}